\theoremstyle{plain}
\newtheorem{theorem}{Theorem}
\newtheorem{lemma}[theorem]{Lemma}
\newtheorem{corollary}[theorem]{Corollary}
\newtheorem{definition}[theorem]{Definition}
\theoremstyle{remark}
\newtheorem{remark}{Remark}
\begin{document}
	
	\title{Cooperative Game-Theoretic Credit Assignment for Multi-Agent Policy Gradients via the Core}

	\author{Mengda Ji, Genjiu Xu, Keke Jia, Zekun Duan, Yong Qiu, Jianjun Ge and Mingqiang Li
		\thanks{This work was supported by the National Key Research and Development Program of China under Grant No. 2021YFA1000402, Innovation Foundation for Doctor Dissertation of Northwestern Polytechnical University (No. CX2024090) \emph{(Corresponding author: Genjiu Xu)}}
		\thanks{Mengda Ji is with the Key Laboratory of Intelligence, Games and Information Processing in Shaanxi Province, and with the Unmanned System Research Institute, Northwestern Polytechnical University, Xi’an 710072, China. E-mail: jimengda@mail.nwpu.edu.cn.}
		\thanks{Genjiu Xu, Keke Jia, Zekun Duan and Yong Qiu are with the Key Laboratory of Intelligence, Games and Information Processing in Shaanxi Province, and with the School of Mathematics and Statistics, Northwestern Polytechnical University, Xi’an 710072, China. E-mail: xugenjiu@nwpu.edu.cn.}
		\thanks{Jianjun Ge and Mingqiang Li are with the Information Science Academy, China Electronics Technology Group Corporation, Beijing 100086, China.}
		}
		
	
	\maketitle
	\begin{abstract}
		This work focuses on the credit assignment problem in cooperative multi-agent reinforcement learning (MARL). Sharing the global advantage among agents often leads to insufficient policy optimization, as it fails to capture the coalitional contributions of different agents. In this work, we revisit the policy update process from a coalitional perspective and propose CORA, an advantage allocation method guided by a cooperative game-theoretic core allocation. By evaluating the marginal contributions of different coalitions and combining clipped double Q-learning to mitigate overestimation bias, CORA estimates coalition-wise advantages. The core formulation enforces coalition-wise lower bounds on allocated credits, so that coalitions with higher advantages receive stronger total incentives for their participating agents, enabling the global advantage to be attributed to different coalition strategies and promoting coordinated optimal behavior. To reduce computational overhead, we employ random coalition sampling to approximate the core allocation efficiently. Experiments on matrix games, differential games, and multi-agent collaboration benchmarks demonstrate that our method outperforms baselines. These findings highlight the importance of coalition-level credit assignment and cooperative games for advancing multi-agent learning.
	\end{abstract}
	
	\begin{IEEEkeywords}
		Cooperative Games, Cooperative Strategy, Multi-Agent Systems, Multi-Agent Reinforcement Learning, Credit Assignment.
	\end{IEEEkeywords}
	
	\section{Introduction}
	\label{Sec:Introduction}
	
	\IEEEPARstart{C}{ooperative} Multi-Agent Reinforcement Learning (MARL) studies how a group of agents can jointly maximize a shared objective through interaction in a common environment \cite{panait2005cooperative,nguyen2020deep,zhang2023multiexperience,liu2026lrs,liu2025fuzzy,chen2025dpf}. This paradigm has shown broad potential across a range of applications \cite{hu2022marllib,wang2025bipartite,wang2025nashminmax,narayanan2025security,huang2025hitl}, including autonomous driving platoons \cite{shalev2016safe}, multi-robot systems \cite{busoniu2008comprehensive}, and large-scale network control \cite{Ma2024efficient}. A central challenge in MARL is to coordinate decentralized agents so that their local decisions collectively induce effective global strategies and maximize the team return \cite{oliehoek2008optimal,lowe2017multi,peng2026sdic}.
	
	Recent advances in policy-gradient methods have substantially improved stability and scalability in multi-agent learning. Deterministic policy-gradient methods, such as MADDPG \cite{lowe2017multi} and its variants, have shown strong performance in many continuous-control tasks. On the stochastic policy-gradient methods, MAPPO \cite{yu2021surprising}, a multi-agent extension of PPO \cite{schulman2017proximal}, has become a widely used baseline in cooperative MARL, while HAPPO and HATRPO \cite{kuba2021trust, zhong2023heterogeneousagent} further improve stability through sequential updates. However, these methods typically share the same global advantage value across agents, which can lead to suboptimal updates \cite{chai2021unmas,huang2022distributional}. The difficulty is that a shared credit signal cannot distinguish heterogeneous contributions across agents and coalitions, and may therefore provide misleading policy-update directions \cite{fu2024policy}. This motivates the development of credit assignment methods.
	
	\IEEEpubidadjcol

	To better distinguish the contributions of different agents, many studies have developed credit assignment methods. Value-based methods like VDN \cite{sunehag2017value} and QMIX \cite{rashid2018qmix}, QTRAN \cite{son2019qtran}, QPLEX \cite{wang2020qplex} and policy-gradient methods like LICA \cite{zhou2020learning}, COMA \cite{foerster2018counterfactual}, VDAC \cite{su2021value}, and FACMAC \cite{peng2021facmac} assign credit from an individual perspective and have improved coordination efficiency \cite{DBLP:conf/icml/WangZHWZGHLF22, wang2020rode}.
	Optimistic Multi-Agent Policy Gradient clips advantages and retains only positive policy updates, thereby reducing the detrimental effect of negative exploratory signals on multi-agent coordination \cite{zhao2024optimistic}.
	Among these methods, COMA is particularly representative, as it evaluates the marginal contribution of each agent to the team and assigns higher credit to agents whose actions contribute more to the global outcome.
	Relation modeling and coordinated exploration for cooperation have also been investigated, e.g., via attention and structured cognition \cite{pu2022attention, qiu2024cognition, hao2023exploration, dong2023wtoe, peng2026sdic}.
	Additionally, DOP \cite{wang2020dop} introduces value function decomposition into the multi-agent actor-critic framework, thereby supporting efficient off-policy learning and improving credit assignment under the centralized-training decentralized-execution setting.
	Overall, these methods primarily assess how individual agents contribute to team performance and then allocate agent-level incentive signals, such as values or advantages. More broadly, however, credit assignment can also be studied from a coalitional perspective.
	
	Despite their success, these methods focus exclusively on either global or individual perspectives. Between these extremes lies an underexplored middle ground: coalitional granularity, where credits are evaluated and allocated at the level of agent subsets (i.e., coalitions \( C \subseteq N \)) \cite{ding2025learning}. Cooperative game theory, traditionally used in economics for contribution attribution and payoff allocation \cite{driessen2013cooperative}, provides several classical solution concepts, including the Shapley value, the core, and the nucleolus, and has also been applied to MARL credit assignment \cite{li2025nucleolus, li2021shapley, wang2022shaq,wang2020shapley}, data valuation \cite{jia2019towards, ghorbani2019data, sim2020collaborative}, federated learning \cite{ray2022fairness, donahue2021optimality}. Recent works have introduced Shapley value-based credit assignment from cooperative game theory into policy gradient methods \cite{wang2020shapley, li2021shapley, wang2022shaq}. These approaches evaluate agents through their marginal contributions across coalitions and allocate values or advantages accordingly, so that agents with larger coalitional contributions receive higher credits. While these approaches provide grounded individual attributions, they do not directly address coalition stability. From a cooperative-game perspective, their interpretation effectively requires the induced MARL problem to be a convex cooperative game, under which the Shapley value lies in the core set \cite{li2025nucleolus, wang2020shapley,li2021shapley,wang2022shaq,driessen2013cooperative,branzei2008models}. In stochastic MARL environments, however, the induced coalitional game can be nonconvex and the exact core may even be empty. Rather than focusing on whether the exact core is nonempty, whether the Shapley value lies in the core, or how to approximate the Shapley value, we instead directly compute a relaxed core solution through a regularized $\epsilon$-core solution.

	In this paper, we propose Core Credit Assignment (CORA), a coalitional credit-assignment framework for multi-agent policy gradient methods. CORA estimates coalitional advantages by evaluating the marginal contributions of coalitions to the global return and computes a regularized least $\epsilon$-core allocation of per-agent advantages. This preserves coalitional rationality while discouraging allocations that suppress beneficial exploratory behaviors. To improve scalability, CORA employs random coalition sampling for efficient approximation.
	
	The main contributions of this paper are threefold:
	\begin{itemize}
		\item We propose a novel coalitional advantage formulation and compute a $\epsilon$-core allocation for credit assignment. The coalitions with high potential advantage values will receive higher advantage values to promote collaborative strategy optimization.
		\item We provide policy-improvement lower bounds at the coalition level, showing that the proposed method systematically reinforces beneficial coalitions.
		\item We develop a sampling approximation and demonstrate consistent performance gains across diverse MARL benchmarks, including matrix games, differential games, VMAS, SMAC, Google Research Football, and Multi-Agent MuJoCo.
	\end{itemize}
	
	
	The rest of this paper is organized as follows. Section~\ref{Sec:Background} introduces the cooperative MARL setting and discusses the limitations of shared-advantage updates. Section~\ref{sec:core_adv_decomp} presents CORA, including coalitional advantage estimation and the regularized least $\epsilon$-core advantage allocation. Section~\ref{sec:theoretical_analysis} provides theoretical analysis, including policy-improvement bounds and the sampled-coalition approximation guarantee. Section~\ref{sec:experiments} reports experimental results, and Section~\ref{sec:conclusion} concludes the paper.

	
	\section{Background}
	\label{Sec:Background}
	This section provides an overview of the foundational concepts and challenges in MARL, focusing on policy gradient methods and credit assignment methods.
	
	\subsection{Problem Formulation}
	In cooperative multi-agent reinforcement learning, a group of agents works together to maximize a shared return within a common environment \cite{panait2005cooperative, kuba2021trust}. This setting can be formalized as a Markov game \cite{littman1994markov, kuba2021trust, zhao2024optimistic} defined by the tuple $\mathcal{G} = \langle N, S, A, \mathbb{P}, r, \gamma \rangle$, where $N = \{1, \dots, n\}$ is the set of agents, $S$ is the state space, $A = \prod_{i \in N} A_i$ is the joint action space, with $A_i$ being the action space of agent $i$, $\mathbb{P}: S \times A \times S \rightarrow [0,1]$ is the transition function, $r: S \times A \rightarrow \mathbb{R}$ is the reward function, and $\gamma \in [0,1)$ is the discount factor. At each time $t \in \mathbb{N}$, each agent $i$ observes a local observation $o_i^t = O_i(s^t)$ and selects an action $a_i^t \in A_i$ drawn from its policy $\pi_i(\cdot | o_i^t;\phi_i)$. The joint action $a^t = (a_1^t, \dots, a_n^t)$ leads to the next state $s^{t+1} \sim \mathbb{P}(s^{t+1} | s^t, a^t)$ and generates a common reward $r^t = r(s^t, a^t)$ for all agents. The agents aim to update their policies to maximize the shared expected cumulative reward:
	\begin{equation}
		\max_{\pi} J(\pi) = \mathbb{E}_{s, a \sim \pi, \mathbb{P}} \left[ \sum_{t=0}^\infty \gamma^t r(s_t,a_t) \right].
	\end{equation}
	
	Under the centralized training with decentralized execution (CTDE) paradigm \cite{oliehoek2008optimal,lowe2017multi,yu2021surprising}, each agent $i$ is trained with global information and executes using only its local observation $o_i = O_i(s) \in \mathcal{O}_i$. Central components in the training process are the global state value function $V(s)$, which estimates the expected return from state $s$, and the global state-action value function $Q(s,a)$, which estimates the expected return from state $s$ after taking joint action $a$. Denoting the advantage $A(s^t,a^t) = Q(s^t,a^t) - V(s^t)$ with GAE estimator $A_{GAE}^{t} = \sum_{l=0}^{\infty} (\gamma \lambda)^l \delta_{t+l}$, where $\delta_t$ denotes the TD error $\delta_t = r_t + \gamma V(s_{t+1}) - V(s_t)$, a standard policy gradient for agent $i$ is
	\begin{equation}
		\nabla_{\phi_i}J = \mathbb{E} \left[ \nabla_{\phi_i}\log \pi_i(a_i^t | o_i^t;\phi_i) A_i(s^t, a^t) \right],
	\end{equation}
	where individual advantage $A_i(s^t, a^t)$ is the per-agent credit signal. Sharing the same advantage $A(s^t, a^t)$ fails to capture heterogeneous contributions of different agents, leading to inefficient policy updates and slower convergence.
	For concise notation, let $o=(o_1,\dots,o_n)$ with $o_i=O_i(s)$ for all $i\in N$. For any coalition $C\subseteq N$, let $o_C=(o_i)_{i\in C}$. We write $\pi_\phi(a| o)=\prod_{i\in N}\pi_i(a_i| o_i;\phi_i)$ and $\pi_C(a_C| o_C)=\prod_{i\in C}\pi_i(a_i| o_i;\phi_i)$.
	
	Throughout this paper, we focus on multi-agent credit assignment via advantage allocation for policy-gradient methods, using it to drive policy updates that strengthen effective collaboration.
	
	\subsection{Sharing Advantage}\label{sec:sharing_advantage}
	
	Many credit assignment methods such as COMA \cite{foerster2018counterfactual}, VDN \cite{sunehag2017value}, QMIX \cite{rashid2018qmix}, and LICA \cite{zhou2020learning} assign advantage or value from an individual or marginal perspective. In this paper, besides the global advantage, we consider the coalitional advantage for each coalition of agents. Let $N=\{1,\dots,n\}$ denote the set of all agents. For a given sample $(s, a)$, we evaluate the scenario where agents in coalition $C \subseteq N$ take the explored actions $a_C$, while the remaining agents follow the current policy $\pi_{N\setminus C}(\cdot | o_{N\setminus C})$.
	
	Sharing the global advantage $A(s, a)$ among agents often leads to insufficient policy updates. This approach incentivizes each agent to update its policy $\pi_i(a_i| o_i)$ to either approach action $a_i$ with $A(s, a) > 0$ or avoid those with $A(s, a) < 0$. Specifically, when an action $a$ with $Q(s, a) < V(s)$ is explored during training, all agents are penalized via $A(s, a) < 0$, and the policy $\pi_i(a_i| o_i)$ for each agent is updated to reduce its probability. This occurs even if a coalition $C$ could form a superior joint action $(a_C, a'_{N\setminus C})$ satisfying $Q(s, a_C, a'_{N\setminus C}) > V(s)$.
	
	Moreover, consider the case where the executed action $a^*$ is already optimal. If agents in coalition $C$ explore a new action $a_C$ while others act optimally, and $Q(s, a_C, a_{N\setminus C}^*) < V(s)$, then the probability $\pi_i(a_i^*| o_i)$ for each agent $i \notin C$ is reduced due to $A(s, a) < 0$, destabilizing the probability distribution over the optimal action $a^*$.
	
	In summary, the value of coalition actions can be further exploited. By evaluating the advantage of coalition $C$, each agent $i$ in a coalition with high coalitional advantage should receive stronger credit $A_i$, even when the executed joint action has a low or negative global advantage $A(s,a)$.
	
	To intuitively illustrate the limitations of sharing the global advantage, consider a two-agent matrix game scenario at state $s$, as shown in Table \ref{tab:toy_example}. In the table, $\pi_i$ denotes that agent $i$ selects actions according to its policy, while $a_i$ denotes that agent $i$ takes a deterministic action.
	
	\begin{table}[htpb]
		\centering
		\caption{Two-Agent Matrix Game: Global Advantage $A(s, a_1, a_2)$}
		\label{tab:toy_example}
		\begin{minipage}{0.45\columnwidth}
			\centering
			\caption{Example 1}
			\begin{tabular}{ccc}
				\toprule
					$A$ & $\pi_1$ & $a_1$ \\
				\midrule
				$\pi_2$ & $0$  & $-5$ \\
				$a_2$       & $-5$ & $5$ \\
				\bottomrule
			\end{tabular}
		\end{minipage}
		\hfill
		\begin{minipage}{0.45\columnwidth}
			\centering
			\caption{Example 2}
			\begin{tabular}{ccc}
				\toprule
				$A$ & $\pi_1$ & $a_1$ \\
				\midrule
				$\pi_2$ & $0$  & $5$ \\
				$a_2$       & $-5$ & $-5$ \\
				\bottomrule
			\end{tabular}
		\end{minipage}
	\end{table}
	
	In \textbf{Example 1}, the globally optimal action is $a = (a_1, a_2)$. If the agents sample this joint action, sharing the global advantage $A(s, a_1, a_2) = 5$ correctly incentivizes both agents to increase the probabilities of taking $a_1$ and $a_2$.
	
	However, in \textbf{Example 2}, the shared-advantage update can encounter a relative overgeneralization (RO) problem. Suppose the agents explore and sample the joint action $a = (a_1, a_2)$, resulting in a global advantage $A(s, a_1, a_2) = -5$. A standard shared-advantage method then assigns $A_1 = A_2 = -5$, penalizing both agents. This penalty is suboptimal for agent 1, because action $a_1$ still has high potential ($A_{\{1\}}(s, a_1)=5$). Even though the sampled global advantage is negative ($-5$), the expected coalitional advantage for agent 1 taking $a_1$ is positive ($5$).
	
	A rational credit assignment mechanism should preserve this asymmetric potential by enforcing coalition-wise lower bounds:
	$A_1 \geq A_{\{1\}}(s, a_1) - \epsilon = 5 - \epsilon$
	and
	$A_2 \geq A_{\{2\}}(s, a_2) - \epsilon = -5 - \epsilon$.
	It should also satisfy the efficiency constraint
	$A_1 + A_2 = A_{\{1,2\}}(s, a_1, a_2) = A(s, a_1, a_2) = -5$.
	Here $\epsilon \geq 0$ is a nonnegative slack variable that relaxes coalition constraints to ensure feasibility.
	For example, one feasible solution is $(A_1, A_2, \epsilon) = (2.5, -7.5, 2.5)$, which satisfies $A_1\ge 5-\epsilon$, $A_2\ge -5-\epsilon$, and $A_1+A_2=-5$. This allocation encourages using $a_1$ and suppresses the exploratory action $a_2$.

	For a three-agent illustration, let $N=\{1,2,3\}$ and consider three coalition partitions: $(\{1,2\},\{3\})$, $(\{1,3\},\{2\})$, and $(\{2,3\},\{1\})$. For each coalition $C$ given by the partitions, $\pi_C$ means policy sampling for coalition $C$, and $a_C$ means a deterministic exploratory action profile for coalition $C$.
	
	\begin{table}[!t]
		\centering
		\small
		\renewcommand{\arraystretch}{1.12}
		\setlength{\tabcolsep}{4pt}
		\caption{Coalition-level matrix examples for $N=\{1,2,3\}$}
		\label{tab:coalition_3agent}
		\begin{minipage}{0.32\columnwidth}
			\centering
			\begin{tabular}{ccc}
				\toprule
				$A$ & $\pi_{12}$ & $a_{12}$ \\
				\midrule
				$\pi_3$ & $0$  & $1$ \\
				$a_3$   & $-2$ & $-2$ \\
				\bottomrule
			\end{tabular}
		\end{minipage}
		\hfill
		\begin{minipage}{0.32\columnwidth}
			\centering
			\begin{tabular}{ccc}
				\toprule
				$A$ & $\pi_{13}$ & $a_{13}$ \\
				\midrule
				$\pi_2$ & $0$  & $0$ \\
				$a_2$   & $-2$ & $-2$ \\
				\bottomrule
			\end{tabular}
		\end{minipage}
		\hfill
		\begin{minipage}{0.32\columnwidth}
			\centering
			\begin{tabular}{ccc}
				\toprule
				$A$ & $\pi_{23}$ & $a_{23}$ \\
				\midrule
				$\pi_1$ & $0$  & $5$ \\
				$a_1$   & $-1$ & $-2$ \\
				\bottomrule
			\end{tabular}
		\end{minipage}
	\end{table}
	
	From Table~\ref{tab:coalition_3agent},
	\[
	\begin{aligned}
		A_{\{1,2\}}(s,a_{12}) &= 1,\qquad
		A_{\{1,3\}}(s,a_{13}) = 0,\\
		A_{\{2,3\}}(s,a_{23}) &= 5,\qquad
		A_{\{1,2,3\}}(s,a_1,a_2,a_3) = -2.
	\end{aligned}
	\]
	Coalitional rationality and efficiency require
	\[
	\begin{aligned}
		&A_1+A_2 \ge 1-\epsilon, \qquad & &A_1+A_3 \ge -\epsilon,\\
		&A_2+A_3 \ge 5-\epsilon, \qquad & &A_1+A_2+A_3 = -2.
	\end{aligned}
	\]
	A feasible solution is $(A_1,A_2,A_3,\epsilon)=(-4,1,1,4)$.
	This gives $A_2+A_3=2$. In contrast, the sharing advantage approach directly uses the sampled global signal
	$A(s,a_1,a_2,a_3)=-2<0$, which provides a negative update signal for coalition $\{2,3\}$.
	
	\section{Core Credit Assignment for Multi-Agent Policy Gradients}
	\label{sec:core_adv_decomp}
	In this section, we first define coalitional advantages and then present an advantage allocation algorithm based on the core solution for credit assignment.
	
	\subsection{Coalitional Advantage}\label{sec:coalitional_advantage}
	Consider a global value function $Q(s, a)$, which describes the return of the joint action $a$ in state $s$. The advantage of coalition $C$, denoted as $A_C(s, a_C)$, is defined as:
	\begin{equation}
		A_C(s, a_C) = \mathbb{E}_{a_{N \setminus C} \sim \pi_{N \setminus C}(\cdot | o_{N\setminus C})}[Q(s, a_C, a_{N \setminus C})] - V(s),
	\end{equation}
	where the first term represents the expected return when coalition $C$ takes the sampled action $a_C$, and the other agents $i \notin C$ follow the current strategy $\pi_{N \setminus C}(\cdot | o_{N\setminus C})$. Subtracting the baseline value $V(s)$ gives the advantage of coalition $C$ taking action $a_C$ alone. Incidentally, the global value naturally satisfies $A_N(s, a) = Q(s, a) - V(s) = A(s, a)$. By defining the advantage in this way, we can clearly quantify the contribution of each coalition action $a_C$ to the team.
		
	\subsection{Advantage Allocation via the Regularized Least $\epsilon$-Core}\label{sec:advantage_decomposition}
	The next problem we need to solve is how to allocate advantage $A_i(s, a)$ to each agent $i \in N$ based on $2^n$ advantage values $A_{C}(s,a_C)$ (for each $C \subseteq N$). Here $A_i(s,a)$ denotes the allocated per-agent advantage used in the actor update. Intuitively, if a coalition action $a_C$ yields a high advantage value $A_{C}(s,a_C)$, the total advantage assigned to the agents in that coalition should not be too small. Formally, we require
	\begin{equation}\label{coalitional_rationality_eps}
		\sum_{i\in C}A_i(s, a) \geq A_C (s, a_C) - \epsilon.
	\end{equation}
	This allocation principle aligns with coalitional rationality in cooperative game theory. If coalition actions $a_C$ are promising, it is beneficial to incentivize each $a_i$ ($i \in C$) to adjust its policy distribution, thereby encouraging exploration of this action in the future.
	
	Additionally, it is essential to ensure that $\sum_{i\in N} A_i(s, a) = A_N(s, a) = A(s, a)$, which is known as effectiveness in cooperative game theory and is also widely adopted as a guiding principle in value decomposition methods. For convenience, given current state $s$ and action $a$, we denote the advantage value of agent $i$, $A_i(s, a)$, simply as $A_i$.
	
	The constraints above define a strong $\epsilon$-core feasible set in cooperative game theory \cite{driessen2013cooperative}:
	\begin{equation}
		\begin{aligned}
			{} & \Bigl\{ (A_1, \cdots, A_n)\in\mathbb{R}^n \;\Bigm|\; \sum\limits_{i\in N} A_i = A_N(s,a), \sum\limits_{i\in C} A_i \ge \\
			& A_C(s,a_C) - \epsilon,\forall C\subseteq N \Bigr\}.
		\end{aligned}
	\end{equation}
	where $\epsilon \geq 0$ is a non-negative parameter that allows for a small deviation from the ideal condition.
	
	Generally, the $\epsilon$-core may admit infinitely many feasible allocations, but not all of them are desirable. In particular, some allocations satisfying coalition rationality may place all credit on a single agent, leaving others without effective incentives. We therefore compute a regularized least $\epsilon$-core allocation by minimizing $\epsilon$ together with a variance regularization term that discourages highly imbalanced solutions:
	\begin{equation}\label{eq:qp}
		\begin{aligned}
			\underset{\epsilon \geq 0, A_1, \dots, A_n}{\text{minimize}} & \quad \epsilon + \lambda_{\mathrm{reg}} \sum_{i \in N} \left(A_i - \frac{1}{|N|} A_{N} (s, a) \right)^2, \\
			\text{subject to:} 
			& \sum_{i \in N} A_i = A_{N} (s, a), \\
			& \sum_{i \in C} A_i \geq A_{C} (s, a_C) - \epsilon, \forall C \subseteq N.
		\end{aligned}
	\end{equation}
	This formulation ensures a more balanced allocation while respecting coalition rationality. Here $\lambda_{\mathrm{reg}}$ denotes the regularization coefficient. We refer to the second term in \eqref{eq:qp} as the variance regularization term; in the experimental section, we abbreviate it as the Std term to match the figure labels. In detail, $\mathbb{E}_{a_{N \setminus C} \sim \pi_{N \setminus C}(\cdot | o_{N\setminus C})}[Q(s, a_C, a_{N \setminus C})]$ can be estimated using Monte Carlo sampling, approximately given by $\frac{1}{|K|} \sum_{k \in K} Q(s, a_C, a_{N \setminus C}^k)$ where $K$ is the set of sampled trajectories, and $a_{N \setminus C}^k$ represents the action taken by the agents in $N \setminus C$ during the $k$-th trajectory.
	
	In practice, critic evaluation of unseen or exploratory coalition actions $a_C$ may suffer from overestimation bias. Such extrapolation errors can assign overly large advantages to suboptimal or risky actions, thereby distorting the core allocation. To obtain more conservative updates, we adopt clipped double Q-learning \cite{fujimoto2018addressing}: two independent state-action critics, $Q_{\theta_1}$ and $Q_{\theta_2}$, are maintained, and their minimum is used to compute the coalitional advantage:
	\begin{equation}
		A_C(s, a_C) = \min_{j \in \{1,2\}} \frac{1}{|K|} \sum_{k \in K} Q_{\theta_j}(s, a_C, a_{N \setminus C}^k) - V(s).
	\end{equation}
	This pessimistic estimate limits positive approximation errors and yields more robust coalition evaluation.
	
	Algorithm~\ref{alg:cora-v} summarizes CORA within a standard actor-critic training loop. Our implementation uses one value critic $V(s)$ to estimate the grand-coalition advantage $A_N(s,a)$ via GAE and two state-action critics $Q_{\theta_1}(s,a)$ and $Q_{\theta_2}(s,a)$ to estimate $A_C(s,a_C)$ through policy marginalization and clipped double-Q evaluation. At each update, sampled coalitions are evaluated, a constrained quadratic program produces the allocated per-agent advantages $A_i$, and these advantages guide the actor updates.

	\begin{figure}[!t]
		\centering
		\includegraphics[width=\columnwidth]{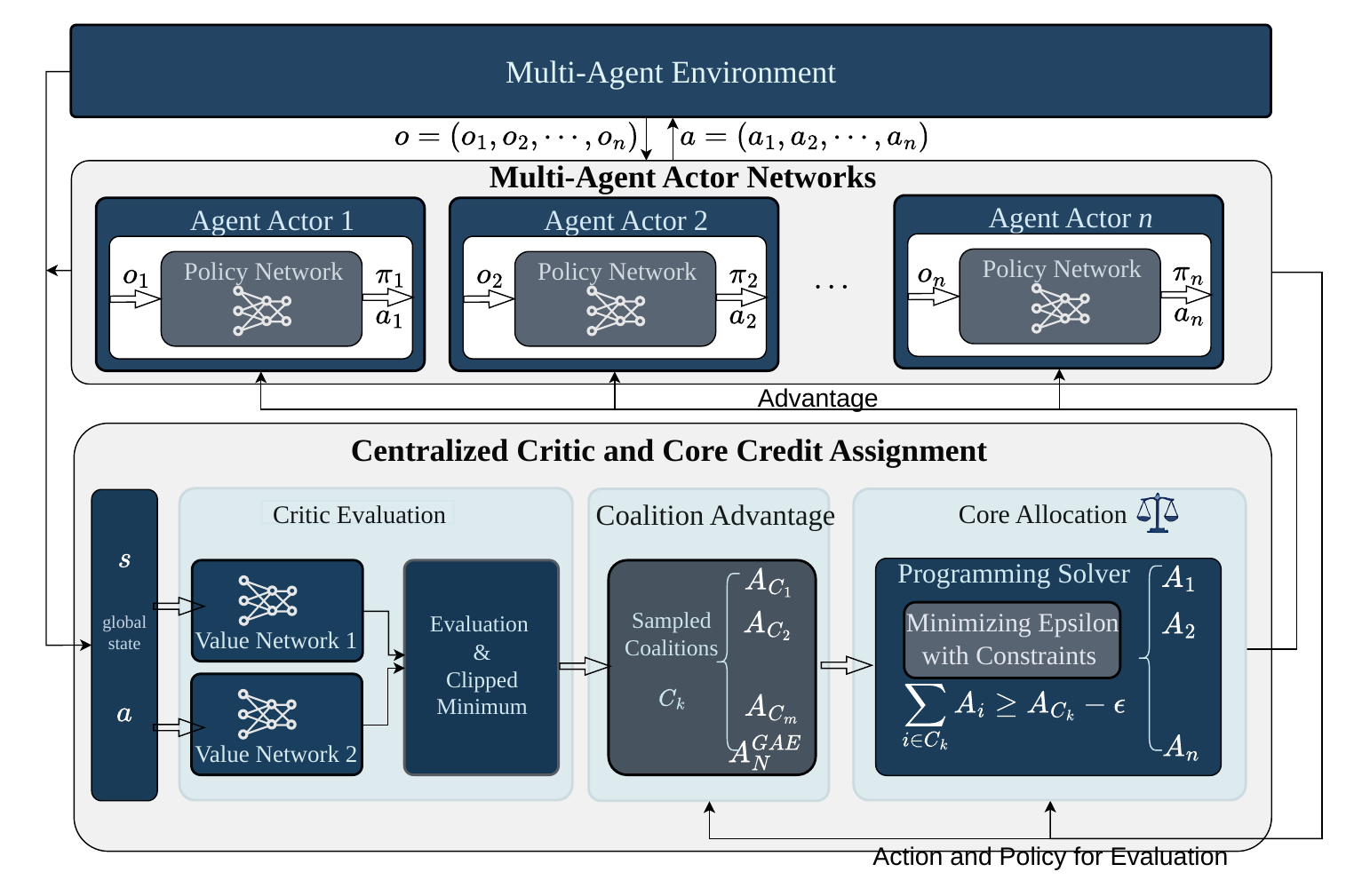}
		\caption{Overview of the CORA framework.}
	\end{figure}

	\begin{algorithm}[htbp]
		\caption{CORA: Core Advantage Allocation (PPO Policy)}
		\label{alg:cora-v}
		\begin{algorithmic}[1]
			\small
			\STATE \textbf{Initialize:} Central critic networks $\theta_V$, $\theta_{Q_1}$, $\theta_{Q_2}$; actor network $\phi_i$ for each agent $i$
			\FOR{each training episode $e = 1, \dots, E$}
			\STATE Initialize state $s^0$ and experience buffer
			\FOR{each step $t$}
			\STATE Sample action $a_i^t$ from $\pi_i(a_i^t | o_i^t;\phi_i)$ for each agent
			\STATE Execute the joint action $(a_1^t, \dots, a_n^t)$
			\STATE Get reward $r^{t}$ and next state $s^{t+1}$
			\STATE Add data to experience buffer
			\ENDFOR
			\STATE Collate episodes in buffer into a single batch
			\STATE Compute the critic target value: $y^t = r^{t} + \gamma V(s^{t+1}; \theta_V)$
			\FOR{$t = 1, \dots, T$}
			\STATE Sample $m$ coalitions $\mathcal{C} = \{ C_1, \dots, C_m \} \subseteq 2^N$
			\FOR{each coalition $C \in \mathcal{C}$}
			\STATE Estimate $Q_{C,1}(s^t, a_C^t)$ and $Q_{C,2}(s^t, a_C^t)$
			\STATE by Monte Carlo marginalization over $a_{N\setminus C}\sim\pi_{N\setminus C}(\cdot| o_{N\setminus C}^t)$
			\STATE Set $A_C(s^t, a_C^t)=\min\{Q_{C,1}(s^t, a_C^t),Q_{C,2}(s^t, a_C^t)\}-V(s^t)$
			\ENDFOR
			\STATE Estimate grand coalition advantage $A_N(s^t, a^t)$ with GAE
			\STATE Solve the programming problem to obtain allocated per-agent advantages $\hat{A}_i^t$
			\ENDFOR
			\STATE Define $r_i^t(\phi_i)=\pi_i(a_i^t | o_i^t;\phi_i)/\pi_i(a_i^t | o_i^t;\phi_i^{\mathrm{old}})$
			\STATE Update actor networks $\phi_i$ by maximizing the PPO-clipped surrogate objective:
			\[
			\begin{aligned}
			L_i^{\mathrm{PPO}}(\phi_i)=\sum_t \min\!\Big(&r_i^t(\phi_i)\hat{A}_i^t, \\
			&\mathrm{clip}\!\left(r_i^t(\phi_i),1-\epsilon_{\text{clip}},1+\epsilon_{\text{clip}}\right)\hat{A}_i^t\Big)
			\end{aligned}
			\]
			\STATE Update critic $\theta_V$ using TD error: \quad
			$\sum_t \left(V(s^t; \theta_V) - y^t\right)^2$
			\STATE Update critics $\theta_{Q_1}$ and $\theta_{Q_2}$ using errors: \quad
			$\sum_t \left(Q(s^t, a^t; \theta_{Q_1}) - y^t\right)^2$ and $\sum_t \left(Q(s^t, a^t; \theta_{Q_2}) - y^t\right)^2$
			\ENDFOR
		\end{algorithmic}
	\end{algorithm}
	
	\section{Theoretical Analysis for Core Credit Assignment}\label{sec:theoretical_analysis}
	In this section, we present theoretical results for the proposed method together with the sampling approximation used in practice. Using the standard compatible function approximation assumption in policy-gradient \cite{sutton2000policygradient,sutton2018reinforcement}, we analyze actor-critic updates (Subsections~\ref{app:fo}, \ref{app:coal-lb}, and \ref{app:concentration}). We also provide results that do not rely on this assumption, including a trust-region (TRPO/PPO-style) policy-improvement analysis and the sampled-coalition approximation guarantee (Subsections~\ref{appendix:trust-region-analysis} and \ref{app:sampled_coalitions}).
	
	\subsection{Preliminaries and Notation}\label{app:prelim}
	Let the actor parameters be $\phi=(\phi_1,\dots,\phi_n)$. The joint observation is $o=(o_1,\dots,o_n)$ with $o_i=O_i(s)$ for each $i\in N$, and the factored joint policy is
	\[
	\pi_\phi(a| o)=\prod_{i=1}^n \pi_i(a_i| o_i;\phi_i).
	\]
	When there is no ambiguity, we omit explicit dependence on $\phi_i$.
	For each agent $i$, define the score feature and Fisher matrix as
	\[
	\psi_i(o_i,a_i)=\nabla_{\phi_i}\log \pi_i(a_i| o_i;\phi_i),\qquad
	F_i=\mathbb E[\psi_i\psi_i^\top],
	\]
	and define the block-diagonal Fisher matrix
	\[
	F=\mathrm{diag}(F_1,\dots,F_n)\succeq 0.
	\]
	In practice, the empirical Fisher matrix may be singular. We therefore compute a damped inverse, i.e.,
	replace $F_i^{-1}$ by $(F_i+\beta I)^{-1}$ for a small $\beta>0$. For notational simplicity, we keep
	writing $F_i^{-1}$ and assume it is well-defined (possibly after damping).
	The NPG step is
	\begin{equation}
		\phi_i'=\phi_i+\alpha\,F_i^{-1}g_i,
		\qquad
		g_i=\mathbb E[\psi_i\,A_i],
		\label{eq:npg}
	\end{equation}
	for step size $\alpha>0$. For a sampled pair $(s,a)$, a credit allocation $\{A_i\}_{i\in N}$ satisfies the $\epsilon$-core constraints if
	\begin{equation}
		\begin{aligned}
			\sum_{i\in N}A_i &= A_N,\\
			\sum_{i\in C}A_i &\ge A_C(s,a_C) - \epsilon,
			\ \  \forall C\subseteq N.
		\end{aligned}
		\label{eq:core}
	\end{equation}

	\subsection{Compatible Function Approximation}\label{app:compatible}
	\begin{definition}[Compatible function approximation]\label{def:compatible}
		For agent $i$, consider the linear class $\mathcal S_i=\{w^\top\psi_i: w\in\mathbb R^{d_i}\}$ and let
		\[
		w_i^\star\in\arg\min_{w}\ \mathbb E\big[(A_i-w^\top\psi_i)^2\big].
		\]
		We call $A_i$ \emph{compatibly representable} if $A_i=w_i^{\star\top}\psi_i$ almost surely (equivalently, $A_i\in\mathcal S_i$). When $F_i$ is invertible, $w_i^\star$ satisfies the normal equation $\mathbb E[\psi_i A_i]=\mathbb E[\psi_i\psi_i^\top]w_i^\star=F_i w_i^\star$ (see, e.g., \cite{sutton2000policygradient,sutton2018reinforcement}).
	\end{definition}

	\begin{lemma}\label{lem:npg-w}
		Assume $F_i$ is invertible. With $g_i=\mathbb E[\psi_i A_i]$, the NPG step \eqref{eq:npg} gives
		$\phi_i'-\phi_i=\alpha\,F_i^{-1}g_i=\alpha\,w_i^\star$.
	\end{lemma}
	\begin{proof}
		From $F_i w_i^\star=g_i$ and invertibility of $F_i$, left-multiply by $F_i^{-1}$ to obtain
		$w_i^\star=F_i^{-1}g_i$.
		Substitute into \eqref{eq:npg}.
	\end{proof}

	\begin{lemma}\label{lem:firstorder}
		Assume $\log\pi_i(a_i| o_i;\phi_i)$ is twice continuously differentiable.
		Let $\Delta\phi_i=\phi_i'-\phi_i$. Then there exists $\xi_i$ on the line segment between
		$\phi_i$ and $\phi_i'$ such that
		\[
		\begin{aligned}
			\Delta\log \pi_i(a_i| o_i)
			&= \log\pi_i'(a_i| o_i)-\log\pi_i(a_i| o_i) \\
			&= \psi_i(o_i,a_i)^\top\Delta\phi_i \\
			&\quad +\frac{1}{2}\,\Delta\phi_i^\top
			\big(\nabla_{\phi_i}^2\log\pi_i(a_i| o_i;\xi_i)\big)\Delta\phi_i.
		\end{aligned}
		\]
	\end{lemma}
	\begin{proof}
		Apply the second-order Taylor theorem to $\log\pi_i(a_i| o_i;\phi_i)$ along the direction $\Delta\phi_i$.
	\end{proof}

	\subsection{First-Order Changes}\label{app:fo}
	\begin{theorem}\label{thm:main-fo}
		Assume compatible function approximation (Definition~\ref{def:compatible}) and the efficiency condition $\sum_{i\in N}A_i=A_N$.
		Consider one NPG step \eqref{eq:npg} with step size $\alpha>0$.
		Assume each $F_i$ is invertible.
		Assume for each agent $i$ that $\log\pi_i(\cdot| o_i;\phi_i)$ is twice continuously differentiable
		and its Hessian is uniformly bounded on the line segment between $\phi_i$ and $\phi_i'$:
		\begin{flalign*}
			&\big\|\nabla_{\phi_i}^2\log\pi_i(a_i| o_i;\xi_i)\big\|_{\mathrm{op}}
			\le L_i, \qquad \text{for all } \xi_i\in[\phi_i,\phi_i']. &&
		\end{flalign*}
		\begin{flalign}
			&\left|\Delta\log \pi_i(a_i| o_i)-\alpha\,A_i\right|
			\le \frac{\alpha^2}{2}\,L_i\,\big\|F_i^{-1}g_i\big\|_2^2, && \label{eq:fo-indiv}\\
			&\left|\Delta\log \pi(a| o)-\alpha\,A_N\right|
			\le \frac{\alpha^2}{2}\,\sum_{i\in N}L_i\,\big\|F_i^{-1}g_i\big\|_2^2, && \label{eq:fo-joint}\\
			&\left|\Delta\log \pi_C(a_C| o_C)-\alpha\sum_{i\in C}A_i\right|
			\le \frac{\alpha^2}{2}\,\sum_{i\in C}L_i\,\big\|F_i^{-1}g_i\big\|_2^2. && \label{eq:fo-coal}
		\end{flalign}
	\end{theorem}
	\begin{proof}
		By Lemma~\ref{lem:firstorder} and Lemma~\ref{lem:npg-w}, for each $i$,
		\[
		\begin{aligned}
			\Delta\log\pi_i(a_i| o_i)
			&=
			\psi_i^\top(\alpha F_i^{-1}g_i) \\
			&\quad +\frac{\alpha^2}{2}\,
			(F_i^{-1}g_i)^\top \\
			&\qquad \cdot
			\big(\nabla_{\phi_i}^2\log\pi_i(a_i| o_i;\xi_i)\big)(F_i^{-1}g_i).
		\end{aligned}
		\]
		Using the Hessian bound and the definition of the operator norm,
		\[
		\left|\Delta\log\pi_i(a_i| o_i)-\alpha\,\psi_i^\top F_i^{-1}g_i\right|
		\le
		\frac{\alpha^2}{2}\,L_i\,\|F_i^{-1}g_i\|_2^2.
		\]
		By Definition~\ref{def:compatible}, $\psi_i^\top F_i^{-1}g_i = A_i$, which proves \eqref{eq:fo-indiv}.
		Because $\Delta\log\pi(a| o)=\sum_i\Delta\log\pi_i(a_i| o_i)$ and $\sum_i A_i=A_N(s,a)$,
		\[
		\begin{aligned}
			\left|\Delta\log \pi(a| o)-\alpha\,A_N\right|
			&=
			\left|\sum_{i\in N}\big(\Delta\log\pi_i(a_i| o_i)-\alpha\,A_i\big)\right| \\
			&\le
			\sum_{i\in N}\left|\Delta\log\pi_i(a_i| o_i)-\alpha\,A_i\right|,
		\end{aligned}
		\]
		and applying \eqref{eq:fo-indiv} yields \eqref{eq:fo-joint}. The same argument over $i\in C$
		gives \eqref{eq:fo-coal}.
	\end{proof}

	\begin{corollary}\label{cor:prob}
		For any sampled $(s,a)$, $\pi'(a| o)=\pi(a| o)\exp(\Delta\log\pi(a| o))$, hence
		\[
		\Delta\pi(a| o)=\pi(a| o)\Big(\exp(\Delta\log\pi(a| o))-1\Big).
		\]
		Similarly, for any coalition $C$,
		$\pi_C'(a_C| o_C)=\pi_C(a_C| o_C)\exp(\Delta\log\pi_C(a_C| o_C))$ and
		\[
		\Delta\pi_C(a_C| o_C)=\pi_C(a_C| o_C)\Big(\exp(\Delta\log\pi_C(a_C| o_C))-1\Big).
		\]
	\end{corollary}

	\begin{remark}
		All first-order relations remain valid with $A_i$ replaced by its $L^2$ projection onto
		$\mathcal S_i$. Operationally, NPG realizes this via $w_i^\star=F_i^{-1}\mathbb E[\psi_i A_i]$.
	\end{remark}

	\subsection{Coalitional Lower Bounds from the Strong $\epsilon$-Core}\label{app:coal-lb}
	\begin{theorem}\label{thm:main-coal-lb-strong}
		Assume compatible function approximation (Definition~\ref{def:compatible}).
		Consider one NPG step
		$\phi_i'=\phi_i+\alpha\,F_i^{-1}g_i$ with $g_i=\mathbb{E}[\psi_i A_i]$,
		$\psi_i=\nabla_{\phi_i}\log\pi_i(a_i| o_i;\phi_i)$, $F_i=\mathbb{E}[\psi_i\psi_i^\top]$,
		and step size $\alpha>0$.
		Assume each $F_i$ is invertible.
		Assume for each agent $i$ that $\log\pi_i(\cdot| o_i;\phi_i)$ is twice continuously differentiable
		and its Hessian is uniformly bounded on the line segment between $\phi_i$ and $\phi_i'$:
		\[
		\big\|\nabla_{\phi_i}^2\log\pi_i(a_i| o_i;\xi_i)\big\|_{\mathrm{op}}\ \le\ L_i
		\quad \text{for all } \xi_i\in[\phi_i,\phi_i'].
		\]
		Then for any coalition $C\subseteq N$ and any sampled $(s,a)$,
		\begin{equation}
			\Delta\log\pi_C(a_C| o_C)
			\ \ge\
			\alpha\sum_{i\in C} A_i\;-\;
			\frac{\alpha^2}{2}\,\sum_{i\in C} L_i\,\big\|F_i^{-1}g_i\big\|_2^2.
			\label{eq:app-coal-strong}
		\end{equation}
		If, in addition, the strong $\epsilon$-core constraints hold,
		$\sum_{i\in C}A_i \ge A_C(s,a_C)-\epsilon$,
		then
		\begin{equation}
			\begin{aligned}
				\Delta\log\pi_C(a_C| o_C)
				\ \ge\ &
				\alpha\Big(A_{C}(s,a_C)-\epsilon\Big) \\
				&-
				\frac{\alpha^2}{2}\,\sum_{i\in C} L_i\,\big\|F_i^{-1}g_i\big\|_2^2.
			\end{aligned}
			\label{eq:app-coal-strong-core}
		\end{equation}
	\end{theorem}

	\begin{proof}
		For each $i$, apply the second-order Taylor expansion of
		$\log\pi_i(a_i| o_i;\phi_i)$ along the direction $\Delta\phi_i=\phi_i'-\phi_i$:
		\[
		\begin{aligned}
		\Delta\log\pi_i(a_i| o_i)
		&= \psi_i(o_i,a_i)^\top \Delta\phi_i \\
		&\quad + \frac{1}{2}\,\Delta\phi_i^\top
		\big(\nabla_{\phi_i}^2\log\pi_i(a_i| o_i;\xi_i)\big)\,\Delta\phi_i,
		\end{aligned}
		\]
		for some $\xi_i$ on the line segment between $\phi_i$ and $\phi_i'$.
		With $\Delta\phi_i=\alpha F_i^{-1}g_i$ and the operator-norm bound on the Hessian,
		\[
		\Delta\log\pi_i(a_i| o_i)
		\ \ge\
		\alpha\,\psi_i^\top F_i^{-1}g_i \;-\; \frac{\alpha^2}{2}\,L_i\,\|F_i^{-1}g_i\|_2^2.
		\]
		By Definition~\ref{def:compatible}, $\psi_i^\top F_i^{-1}g_i = A_i$.
		Summing over $i\in C$ yields \eqref{eq:app-coal-strong}.
		Combining with the strong $\epsilon$-core inequality gives \eqref{eq:app-coal-strong-core}.
	\end{proof}

	\subsection{Advantage Concentration on a Maximizing Coalition}\label{app:concentration}
	Let $C^\star\in\arg\max_{C\subseteq N}A_C(s,a_C)$. Since $N$ is included in the maximization, we have $A_{C^\star}\ge A_N$.

	\begin{theorem}\label{thm:main-concentration}
		Assume the conditions of Theorem~\ref{thm:main-fo} and the strong $\epsilon$-core constraints \eqref{eq:core}. Then:
		\begin{enumerate}
			\item $\displaystyle \sum_{i\notin C^\star}A_i
			= A_N-\sum_{i\in C^\star}A_i
			\le A_N-(A_{C^\star}-\epsilon)\le \epsilon$, hence
			\[
			\begin{aligned}
				& \Delta\log \pi_{N\setminus C^\star}(a_{N\setminus C^\star}| o_{N\setminus C^\star}) \\
				&\leq \alpha\sum_{i\notin C^\star}A_i
				+ \frac{\alpha^2}{2}\,\sum_{i\notin C^\star}L_i\,\big\|F_i^{-1}g_i\big\|_2^2 \\
				&\leq \alpha\,\epsilon
				+ \frac{\alpha^2}{2}\,\sum_{i\notin C^\star}L_i\,\big\|F_i^{-1}g_i\big\|_2^2.
			\end{aligned}
			\]
			\item By \eqref{eq:app-coal-strong-core},
			\[
			\begin{aligned}
				\Delta\log \pi_{C^\star}(a_{C^\star}| o_{C^\star})
				&\ge \alpha\big(A_{C^\star}(s,a_{C^\star})-\epsilon\big) \\
				&\quad - \frac{\alpha^2}{2}\,\sum_{i\in C^\star} L_i\,\big\|F_i^{-1}g_i\big\|_2^2.
			\end{aligned}
			\]
		\end{enumerate}
	\end{theorem}
	\begin{proof}
		From $\sum_{i\in C^\star}A_i\ge A_{C^\star}-\epsilon$ and $A_{C^\star}\ge A_N$,
		$\sum_{i\notin C^\star}A_i\le \epsilon$; then apply \eqref{eq:fo-coal} to the complement and use the
		upper bound implied by the absolute deviation.
		Apply \eqref{eq:app-coal-strong-core} to $C^\star$.
	\end{proof}

	\subsection{Trust-Region Decomposition and Lower Bounds on Policy Improvement}
	\label{appendix:trust-region-analysis}
	Under the strong $\epsilon$-core constraints \eqref{eq:core}, we derive lower bounds on policy improvement for trust-region (TRPO/PPO-style) policy updates.

	\begin{theorem}\label{thm:trpo-lower-bound}
		Given a factored joint policy $\pi_\phi(a|o)=\prod_{i\in N}\pi_i(a_i| o_i;\phi_i)$ and the CORA advantage allocation satisfying the coalition constraint
		\[
		\sum_{j\in C}A_j(s,a)\ge A_C(s,a_C)-\epsilon,\qquad \forall\,C\subseteq N,
		\]
		the following hold for the trust-region penalized policy update with parameter $\eta>0$.

		\paragraph{Individual improvement lower bound.}
		Assume $m_i\le A_i(s,a)\le M_i$ with $R_i=M_i-m_i$. Then each agent satisfies
		\[
		\Delta\log\pi_i(a_i| o_i)
		\ge
		\eta\!\left(A_i(s,a)-\epsilon\right)
		-\frac{\eta^2 R_i^2}{8}.
		\]

		\paragraph{Coalition improvement lower bound.}
		For any coalition $C\subseteq N$,
		\[
		\Delta\log\pi_C(a_C| o_C)
		\ge
		\eta\big(A_C(s,a_C)-(1+|C|)\epsilon\big)
		-
		\sum_{i\in C}\frac{\eta^2 R_i^2}{8}.
		\]
	\end{theorem}
	\begin{proof}

	Under our setting, the joint policy factorizes as $\pi_\phi(a| o)=\prod_{i\in N}\pi_i(a_i| o_i;\phi_i)$. We aim to maximize the local advantage improvement allocated to each agent while controlling the overall joint KL divergence. For clarity, we present the discrete-action case; the continuous-action case follows by replacing sums with integrals.
	
	Given action $a=(a_i,a_{-i})$, consider the global policy-update problem:
	\begin{equation}
		\begin{aligned}
			\max_{\{\pi_i'\}}\quad
			&\sum_{i\in N}\mathbb{E}_{a_i\sim\pi_i(\cdot| o_i)}\!\left[r_i(a_i| o_i)A_i(s,a)\right]\\
			\text{s.t.}\quad
			&\mathrm{KL}(\pi'(\cdot| o)\|\pi(\cdot| o))\le\delta_{\mathrm{KL}},
		\end{aligned}
	\end{equation}
	where $r_i(a_i| o_i)=\pi_i'(a_i| o_i)/\pi_i(a_i| o_i)$.
	
	Due to factorization, the joint KL satisfies
	\[
	\mathrm{KL}(\pi'\|\pi)=\sum_{i\in N}\mathrm{KL}(\pi_i'\|\pi_i),
	\]
	which implies that there is a single global trust-region constraint.
	
	Relaxing the constraint with $\eta>0$ yields the penalized form:
	\begin{equation}
		\begin{aligned}
		\max_{\{\pi_i'\}} \Bigg(
		&\sum_{i\in N}\mathbb{E}_{a_i\sim\pi_i(\cdot| o_i)}\!\left[r_i(a_i| o_i)A_i(s,a)\right]\\
		&-\frac{1}{\eta}\sum_{i\in N}\mathrm{KL}(\pi_i'(\cdot| o_i)\|\pi_i(\cdot| o_i))
		\Bigg).
		\end{aligned}
	\end{equation}
	The objective fully decomposes across $\pi_i'$, producing $n$ independent subproblems:
	\begin{equation}
		\begin{split}
			\max_{\pi_i'}\ \Big(&\mathbb{E}_{a_i\sim\pi_i(\cdot| o_i)}[r_i(a_i| o_i)A_i(s,a)]\\
			&-\tfrac{1}{\eta}\mathrm{KL}(\pi_i'(\cdot| o_i)\|\pi_i(\cdot| o_i))\Big),\qquad \forall i.
		\end{split}
	\end{equation}
	Let $q_i(a_i)=\pi_i'(a_i| o_i)$ and $p_i(a_i)=\pi_i(a_i| o_i)$.
	The subproblem becomes
	\begin{equation}
		\max_{q_i}\quad
		\sum_{a_i}q_i(a_i)A_i(s,a) - \frac{1}{\eta}\sum_{a_i}q_i(a_i)\log\frac{q_i(a_i)}{p_i(a_i)},
	\end{equation}
	subject to $\sum_{a_i}q_i(a_i)=1$.
	
	Construct the Lagrangian
	\[
	\begin{aligned}
	\mathcal{L}(q_i,\lambda)
	&=\sum_{a_i}q_i(a_i)A_i(s,a) - \frac{1}{\eta}\sum_{a_i}q_i(a_i)\log(q_i/p_i)\\
	&\quad+\lambda\!\left(\sum_{a_i}q_i-1\right).
	\end{aligned}
	\]
	Taking the derivative w.r.t.\ $q_i(a_i)$ and setting it to zero yields
	$\log(q_i/p_i)=\eta A_i(s,a)+c$.
	Thus the optimal update is
	\begin{equation}
		\begin{aligned}
		\pi_i'(a_i| o_i)
		&=\frac{\pi_i(a_i| o_i)\exp(\eta A_i(s,a))}{Z_i(s,a_{-i})},\\
		Z_i
		&=\mathbb{E}_{a_i\sim\pi_i(\cdot| o_i)}[\exp(\eta A_i(s,a))],
		\end{aligned}
	\end{equation}
	and consequently
	\[
	\Delta\log\pi_i(a_i| o_i)
	= \eta A_i(s,a)-\log Z_i(s,a_{-i}).
	\]
	If $m_i\le A_i(s,a)\le M_i$ (define $R_i=M_i-m_i$), let $X=A_i(s,a)$ with $a_i\sim\pi_i(\cdot| o_i)$.
	Hoeffding's lemma gives
	\[
	\log Z_i\le\eta\mathbb{E}[X]+\frac{\eta^2R_i^2}{8}.
	\]
	Hence,
	\begin{equation}
		\begin{aligned}
			\Delta\log\pi_i(a_i| o_i)
			&\ge \eta\big(A_i(s,a)-\mathbb{E}_{a_i\sim\pi_i(\cdot| o_i)}[A_i(s,a)]\big)\\
			&\quad-\frac{\eta^2R_i^2}{8}.
		\end{aligned}
	\end{equation}
	The CORA coalition constraint states that for any $C\subseteq N$,
	\[
	\sum_{j\in C}A_j(s,a)\ge A_C(s,a_C)-\epsilon.
	\]
	It can be proven that $\mathbb{E}_{a_i \sim \pi_i(\cdot| o_i)} [A_i(s, a)] \leq \epsilon$. For any given $(s, a)$, we have $A_i(s, a) = A_N(s, a) - \sum_{j \neq i} A_j(s, a)$, and therefore
	\[
	A_i(s, a) \leq A_N(s, a) - A_{N-i}(s, a_{N-i}) + \epsilon.
	\]
	Since $A_N = Q(s, a) - V(s)$ and $A_{N-i}(s,a_{N-i}) = Q_{N-i}(s,a_{N-i}) - V(s)$ with $Q_{N-i}(s,a_{N-i})=\mathbb{E}_{a_i \sim \pi_i(\cdot| o_i)}[Q(s, a)]$, we obtain
	\[
	\begin{aligned}
		\mathbb{E}_{a_i \sim \pi_i(\cdot| o_i)}[A_i(s, a)]
		\leq\ &
		\mathbb{E}_{a_i \sim \pi_i(\cdot| o_i)}[Q(s, a) - V(s)] \\
		&- \big(Q_{N-i}(s, a_{N-i}) - V(s)\big) + \epsilon.
	\end{aligned}
	\]
	Since $\mathbb{E}_{a_i \sim \pi_i(\cdot| o_i)}[Q(s, a)] = Q_{N-i}(s, a_{N-i})$, it follows that $\mathbb{E}_{a_i \sim \pi_i(\cdot| o_i)}[A_i(s, a)] \leq \epsilon$.
	
	Thus the final lower bound on the individual log-probability improvement becomes
	\begin{equation}
		\Delta\log\pi_i(a_i| o_i)
		\ge
		\eta(A_i(s,a)-\epsilon)
		-\frac{\eta^2R_i^2}{8}.
	\end{equation}
	The coalition log-probability change is
	$\Delta\log\pi_C(a_C| o_C)=\sum_{i\in C}\Delta\log\pi_i(a_i| o_i)$.
	Thus,
	\begin{equation}
		\Delta\log\pi_C(a_C| o_C)
		\ge
		\eta\Big(\sum_{i\in C}A_i(s,a)-|C|\epsilon\Big)
		-
		\sum_{i\in C}\frac{\eta^2R_i^2}{8}.
	\end{equation}
	Applying the coalition advantage constraint
	$\sum_{i\in C}A_i\ge A_C(s,a_C)-\epsilon$
	gives a tighter lower bound:
	\begin{equation}
		\Delta\log\pi_C(a_C| o_C)
		\ge
		\eta\big(A_C(s,a_C)-(1+|C|)\epsilon\big)
		-
		\sum_{i\in C}\frac{\eta^2R_i^2}{8}.
	\end{equation}
	
	\end{proof}

	These bounds indicate that, in addition to increasing the probability of the global action $a_N$ by $A_N$, CORA promotes action profiles $(a_C,\pi_{N\setminus C})$ when they exhibit high coalitional advantages $A_C$.
	
	Next, consider two representative cases. If $A_C(s,a_C)\ge A_N(s,a)\ge 0$, then
	$\sum_{i\in C}A_i \ge A_C(s,a_C)-\epsilon \ge A_N(s,a)-\epsilon$,
	and therefore
	\[
	\sum_{i\notin C}A_i(s,a)
	= A_N(s,a)-\sum_{i\in C}A_i(s,a)
	\le \epsilon.
	\]
	Hence, a high-value coalition $C$ receives almost all of the available advantage, up to the relaxation term $\epsilon$.
	
	If instead $A_C(s,a_C)\ge 0 > A_N(s,a)$, then
	$\sum_{i\in C}A_i \ge A_C(s,a_C)-\epsilon \ge -\epsilon$,
	and
	\[
	\sum_{i\notin C}A_i(s,a)
	= A_N(s,a)-\sum_{i\in C}A_i(s,a)
	\le A_N(s,a)+\epsilon.
	\]
	In this case, minimizing $\epsilon$ limits the penalty assigned to coalition $C$, while the larger negative credit is absorbed by $N\setminus C$. Consequently, even when the sampled global action is poor, a coalition with positive coalitional value can still preserve its action probability. This mechanism is a key principle underlying CORA.

	\subsection{Approximation with Sampled Coalitions}\label{app:sampled_coalitions}
	The approximate quadratic programming problem mentioned in the main text is as follows.

	\begin{align}\label{eq:approx_qp}
		\underset{\epsilon \geq 0, A_1, \dots, A_n}{\text{minimize}} & \quad \epsilon + \lambda_{\mathrm{reg}} \sum_{i \in N} \left(A_i - \frac{1}{|N|} A_N \right)^2, \notag \\
		\text{subject to:} 
		& \sum_{i \in N} A_i =  A_N, \notag \\
		& \sum_{i \in C_k} A_i \geq A_{C_k}(s, a_{C_k}) - \epsilon, \forall C_k \in \mathcal{C}.
	\end{align}
	The proof of Theorem \ref{thm_coalition_approx} follows an approach inspired by \cite{yan2021if}, where the core allocation is approximated using sampled coalitions. The key idea is to leverage the properties of the VC-dimension of a function class to bound the probability of deviating from the true allocation in the core. To establish this result, we introduce the following two known lemmas, which play a crucial role in the proof.

	Before proving the theorem, we first introduce a lemma regarding the VC-dimension of a function class, as this concept is essential to understanding the behavior of the classifier we employ in the proof.

	\begin{lemma}\label{appendix_lemma1}
		Let \( \mathcal{F} \) be a function class from \( \mathcal{X} \) to \( \{-1, 1\} \) with VC-dimension \( d \), and let \( \text{y}:\mathcal{X}\to\{-1,1\} \) be a target labeling function. Then, with \( m = O \Big( \frac{d \log\left( \frac{1}{\delta} \right) + \log\left( \frac{1}{\Delta} \right)}{\delta^2} \Big) \) i.i.d. samples \( \{ \text{x}^1, \dots, \text{x}^m \} \sim \mathcal{P} \), we have:
		\[
		\Big| \mathop{\text{Pr}}_{x \sim \mathcal{P}} [f(\text{x}) \neq \text{y}(\text{x})] - \frac{1}{m}\sum_{i = 1}^{m} \mathbb{1}_{f(\text{x}^i) \neq \text{y}(\text{x}^i)} \Big| \leq \delta,
		\]
		for all \( f \in \mathcal{F} \) and with probability \( 1 - \Delta \).
	\end{lemma}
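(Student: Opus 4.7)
The plan is to prove this standard Vapnik--Chervonenkis uniform convergence bound by the classical symmetrization argument combined with the Sauer--Shelah lemma. First, I would reduce the statement to a uniform convergence statement over a single $\{0,1\}$-valued loss class. Since the target labels $y(x)$ are fixed, the error class $\mathcal{L} = \{\ell_f : x \mapsto \mathbb{1}[f(x) \neq y(x)] \mid f \in \mathcal{F}\}$ inherits the VC dimension of $\mathcal{F}$ (it is at most $d$, by the same shattering argument), so it suffices to show
\[
\Pr\Big[\sup_{\ell \in \mathcal{L}} \bigl|\mathbb{E}_{x \sim \mathcal{P}}[\ell(x)] - \tfrac{1}{m}\sum_{i=1}^m \ell(x^i)\bigr| > \delta\Big] \le \Delta.
\]

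The main step is \emph{symmetrization}: introduce a ghost sample $x'^{1}, \dots, x'^{m}$ drawn i.i.d.\ from $\mathcal{P}$ independently of the original sample. A standard Chebyshev-type argument (which needs the mild condition $m\delta^2 \ge 2$) shows that whenever the empirical deviation from the true mean exceeds $\delta$ for some $\ell$, the gap between the two empirical means exceeds $\delta/2$ with probability at least $1/2$ on the ghost sample. This yields
\[
\Pr\Big[\sup_{\ell}\bigl|\mathbb{E}\ell - \hat{\mathbb{E}}_m \ell\bigr| > \delta\Big] \le 2\,\Pr\Big[\sup_{\ell}\bigl|\hat{\mathbb{E}}_m\ell - \hat{\mathbb{E}}'_m\ell\bigr| > \delta/2\Big],
\]
replacing the population quantity by a second empirical quantity on $2m$ concrete points.

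Next I would perform \emph{randomization}: because the $2m$ points are exchangeable, I can multiply the differences $\ell(x^i) - \ell(x'^{i})$ by i.i.d.\ Rademacher signs $\sigma^i \in \{\pm 1\}$ without changing the distribution. Condition on the $2m$ sample points; by the Sauer--Shelah lemma, $\mathcal{L}$ realizes at most $\Pi_{\mathcal{L}}(2m) \le (2em/d)^d$ distinct $\{0,1\}$-patterns on these points. For each fixed pattern, the randomized sum $\frac{1}{m}\sum_i \sigma^i (\ell(x^i) - \ell(x'^{i}))$ is a sum of bounded i.i.d.\ variables, so Hoeffding's inequality gives a tail bound of $2\exp(-m\delta^2/8)$. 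A union bound over the at most $(2em/d)^d$ patterns, combined with the symmetrization step, produces
\[
\Pr\Big[\sup_{\ell}\bigl|\mathbb{E}\ell - \hat{\mathbb{E}}_m\ell\bigr| > \delta\Big] \le 4\Bigl(\tfrac{2em}{d}\Bigr)^{d}\exp\bigl(-m\delta^2/8\bigr).
\]
Setting the right-hand side to be at most $\Delta$ and solving for $m$ gives $m = O\bigl((d + \log(1/\Delta))/\delta^2\bigr)$, matching the stated rate.

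The part requiring the most care is the symmetrization inequality, since it is the only non-algebraic step and demands that one verify the Chebyshev argument (or use a variance bound for a Bernoulli) to get the factor of $2$ and the $\delta/2$ cleanly; the rest is essentially bookkeeping of constants through Sauer--Shelah and Hoeffding. A secondary subtlety is confirming that $\mathcal{L}$ really has VC dimension at most $d$, which follows because any set shattered by $\mathcal{L}$ is, after relabeling by $y$, shattered by $\mathcal{F}$. Once these pieces are in place, solving the final transcendental inequality $4(2em/d)^d e^{-m\delta^2/8} \le \Delta$ for $m$ yields the advertised sample complexity.
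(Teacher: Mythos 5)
Your proof attempt goes beyond what the paper does: the paper never proves this lemma at all. It is introduced explicitly as one of two ``known lemmas'' (standard uniform convergence over a VC class, used in the style of the sampling-based core-approximation argument of Yan and Procaccia that the paper cites), and the paper's appendix only paraphrases its meaning before invoking it in the proof of Theorem 2. So the relevant question is whether your reconstruction of this classical result is itself sound.

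Your overall route --- reduce to the loss class $\mathcal{L}$ (correctly arguing it inherits VC dimension at most $d$), symmetrize with a ghost sample, randomize with Rademacher signs, apply Sauer--Shelah and Hoeffding, and union bound over realized patterns --- is the standard textbook proof and each individual step is fine, including the constants ($2\exp(-m\delta^2/8)$ per pattern, the factor $4$, the growth bound $(2em/d)^d$). The genuine gap is in the last step. Solving
\[
4\Bigl(\tfrac{2em}{d}\Bigr)^{d}\exp\bigl(-m\delta^2/8\bigr) \le \Delta
\]
requires $m\delta^2/8 \ge \log(4/\Delta) + d\log(2em/d)$, and the term $d\log(2em/d)$ contributes $\Theta\bigl(d\log(1/\delta)\bigr)$ when $m \asymp d/\delta^2$. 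Hence this argument only yields
\[
m = O\Bigl(\tfrac{d\log(1/\delta) + \log(1/\Delta)}{\delta^2}\Bigr),
\]
which is strictly weaker than the stated rate $O\bigl((d + \log(1/\Delta))/\delta^2\bigr)$, since $\log(1/\delta)$ is unbounded as $\delta \to 0$. The log-free rate is true, but it is a sharper result: one needs either Dudley's chaining argument combined with Haussler's covering-number bound for VC classes (equivalently, the Rademacher complexity bound $O(\sqrt{d/m})$ rather than the $O(\sqrt{(d\log m)/m})$ one gets from Massart's finite-class lemma plus Sauer--Shelah), or Talagrand's sharper empirical-process inequality. As written, your final sentence asserting that the transcendental inequality ``yields the advertised sample complexity'' does not follow; either cite the chaining-based refinement for that last step, or accept the extra $\log(1/\delta)$ factor --- which would then propagate into the coalition-sample bound of Theorem 2 in the paper.
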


	This lemma states that if the VC-dimension of a function class is \( d \), then with a sufficient number of samples \( m \), the empirical error rate of a classifier \( f \) is close to its true error rate with probability at least \( 1 - \Delta \).

	For Theorem \ref{thm_coalition_approx}, we use linear classifiers to represent the sampled strong $\epsilon$-core feasibility constraints. The following lemma gives the VC-dimension of the resulting class of linear classifiers.

	\begin{lemma}\label{appendix_lemma2}
		For $d\in\mathbb N$, the class of homogeneous linear classifiers
		\[
		\mathcal{F}^{d}=\left\{\text{x}\mapsto \text{sign}(\text{w}\cdot \text{x}): \text{w}\in\mathbb R^{d}\right\}
		\]
		has VC-dimension $d$.
	\end{lemma}

	This lemma shows that the VC-dimension equals the ambient dimension $d$; we apply it with $d=n+2$.

	\begin{theorem}\label{thm_coalition_approx}
		Given a distribution $\mathcal{P}$ over $2^N$ and parameters $\delta,\Delta>0$, solving (\ref{eq:approx_qp}) over $O(((n+2)\log(1/\delta)+\log(1/\Delta))/\delta^2)$ coalitions sampled from $\mathcal{P}$ yields a sampled regularized least $\epsilon$-core allocation that lies in the $\delta$-probable core with probability at least $1 - \Delta$.
	\end{theorem}
	\begin{proof}
		Consider a coalition \( C \) sampled from the distribution \( \mathcal{P} \). We represent the coalition as a vector \( \text{z}^C = (I^C, -A_{C}(s,a_C), 1) \), where \( I^C \in \{0,1\}^n \) is the indicator vector for the coalition and \( A_{C}(s,a_C) \) is the coalitional advantage (value) of coalition \( C \).
		
		We define a linear classifier \( f \) with parameters \( \text{w}^f = (A, 1, \epsilon) \in \mathbb{R}^{n+2} \). The classifier \( f(\text{z}^C) = \text{sign}(\text{w}^f \cdot \text{z}^C) \) is designed to encode whether the sampled strong $\epsilon$-core constraint is satisfied for coalition \( C \).
		
		To ensure coalition rationality, we require the classifier \( f \) to satisfy \( f(\text{z}^C) = 1 \) for all coalitions \( C \subseteq N \). This means that the allocation satisfies the corresponding feasibility constraints for all coalitions. The class of such classifiers is:
		
		\[
		\mathcal{F} = \left\{ \text{z} \mapsto \text{sign}(\text{w} \cdot \text{z}) : \text{w} = (A,1,\epsilon),\ A \in \mathbb{R}^n,\ \epsilon \in \mathbb{R} \right\}.
		\]
		This class of functions \( \mathcal{F} \) has VC-dimension at most \( n+2 \) by Lemma \ref{appendix_lemma2}.
		
		Now, solving the quadratic programming problem on \( m \) samples of coalitions \( \{C_1, \cdots, C_m\} \) provides a solution \( (\hat{A}, \hat{\epsilon}) \), and the corresponding classifier \( \hat{f} \). For each sample coalition \( C_k \), we have \( \hat{f}(\text{z}^{C_k}) = 1 \).
		
		By applying Lemma \ref{appendix_lemma1}, with probability \( 1 - \Delta \), we obtain the following inequality:
		
		\[
		\mathop{\text{Pr}}\limits_{C \sim \mathcal{P}} \Big[\sum_{i \in C} \hat{A}_i - A_{C}(s, a_C) + \hat{\epsilon} \geq 0\Big] \geq 1 - \delta.
		\]
		Hence, the sampled regularized least $\epsilon$-core allocation obtained from the approximate quadratic program lies in the \( \delta \)-probable core with probability at least \( 1 - \Delta \).
	\end{proof}

	\section{Experiments}
	\label{sec:experiments}
	
	We evaluate the CORA method across several cooperative multi-agent environments, including matrix games, differential games, the VMAS simulator \cite{bettini2022vmas}, the Multi-Agent MuJoCo (MaMujoco) environment \cite{gymnasium_robotics2023github, kuba2021trust}, and the Starcraft Multi-Agent Challenge (SMAC) environment \cite{samvelyan2019starcraft, hu2022marllib}.

	All experiments were conducted on a workstation equipped with an AMD 7970X 32-Core CPU, 128GB RAM, and an NVIDIA RTX 4090 GPU (24GB). Unless otherwise specified, all algorithms employed a two-layer multilayer perceptron (MLP) with hidden width 64. Each configuration was evaluated over 5 or 8 independent runs with different random seeds for both the algorithm and the environment. Unless otherwise noted, credit assignment was performed using a subset containing approximately half of all possible coalitions, specifically $2^{n-1}-1$ coalitions. For computational efficiency, 64 parallel environments were used for the \textit{Navigation} task in VMAS, while the remaining settings used 16 or 4 parallel environments, as summarized in Table~\ref{tab:exp_settings}.

	\begin{table*}[!t]
		\centering
		\caption{Training hyperparameters across environments.}
		\label{tab:exp_settings}
		\scriptsize
		\setlength{\tabcolsep}{4.2pt}
		\begin{tabular}{lccccccc}
			\toprule
			\textbf{Environment} & \textbf{Actor LR} & \textbf{Critic LR} & \textbf{Clip} $\epsilon_{\text{clip}}$ & \textbf{Entropy Coef.} & \textbf{MC Samples} $K$ & \textbf{Parallel Envs} & \textbf{Runs} \\
			\midrule
			Matrix Games         & $5 \times 10^{-4}$ & $5 \times 10^{-3}$ & 0.3 & $10^{-3}$ & 16 & 4  & 5 \\
			Differential Games   & $5 \times 10^{-5}$ & $5 \times 10^{-4}$ & 0.2 & $10^{-4}$ & 16 & 4  & 5 \\
			Multi-Agent MuJoCo   & $5 \times 10^{-4}$ & $5 \times 10^{-3}$ & 0.2 & $10^{-4}$ & 64 & 4  & 5 \\
			VMAS (Navigation)    & $5 \times 10^{-4}$ & $5 \times 10^{-3}$ & 0.2 & $10^{-4}$ & 16 & 64 & 5 \\
			VMAS (Others)        & $5 \times 10^{-4}$ & $5 \times 10^{-3}$ & 0.2 & $10^{-4}$ & 16 & 16 & 5 \\
			SMAC                 & $5 \times 10^{-4}$ & $5 \times 10^{-4}$ & 0.2 & $10^{-3}$ & 64 & 20 & 8 \\
			GRF                  & $5 \times 10^{-4}$ & $5 \times 10^{-4}$ & 0.2 & $10^{-3}$ & 64 & 50 & 8 \\
			\bottomrule
		\end{tabular}
		\vspace{2pt}
		
		\footnotesize Shared across all environments: $\gamma=0.99$, GAE $\lambda=0.95$, epochs/update $=10$, and core regularization $\lambda_{\mathrm{reg}}=10^{-2}$.
	\end{table*}
	
	\subsection{Matrix Games}
	
	In this section, we construct two matrix-style cooperative game environments to evaluate the fundamental performance of different algorithms.
	
	\textbf{Matrix Team Game (MTG)}: In this environment, agents receive a shared reward at each time step based on their joint action, determined by a randomly generated reward matrix. Each element of the matrix is uniformly sampled from the interval $[-10, 20]$. The game proceeds for 10 steps per episode. Each agent observes a global one-hot encoded state indicating the current step number, allowing them to learn time-dependent coordination strategies.
	
	\textbf{Multi-Peak Matrix Team Game}: To further evaluate each algorithm's ability to optimize cooperative strategies in environments with multiple local optima, we extend MTG to design a more challenging setting. The matrix is filled with background noise in the range $[-10, 0]$, overlaid with multiple reward peaks. Among them, one peak is the global optimum (highest value), while the rest are local optima. Actions deviating from peak combinations incur heavy penalties due to the negative background. This setting is designed to test whether algorithms can escape suboptimal solutions and discover globally coordinated strategies.
	
	\begin{figure}[t]
		\centering
		\subfloat[Base (Matrix Team Game)]{%
			\includegraphics[width=0.48\columnwidth]{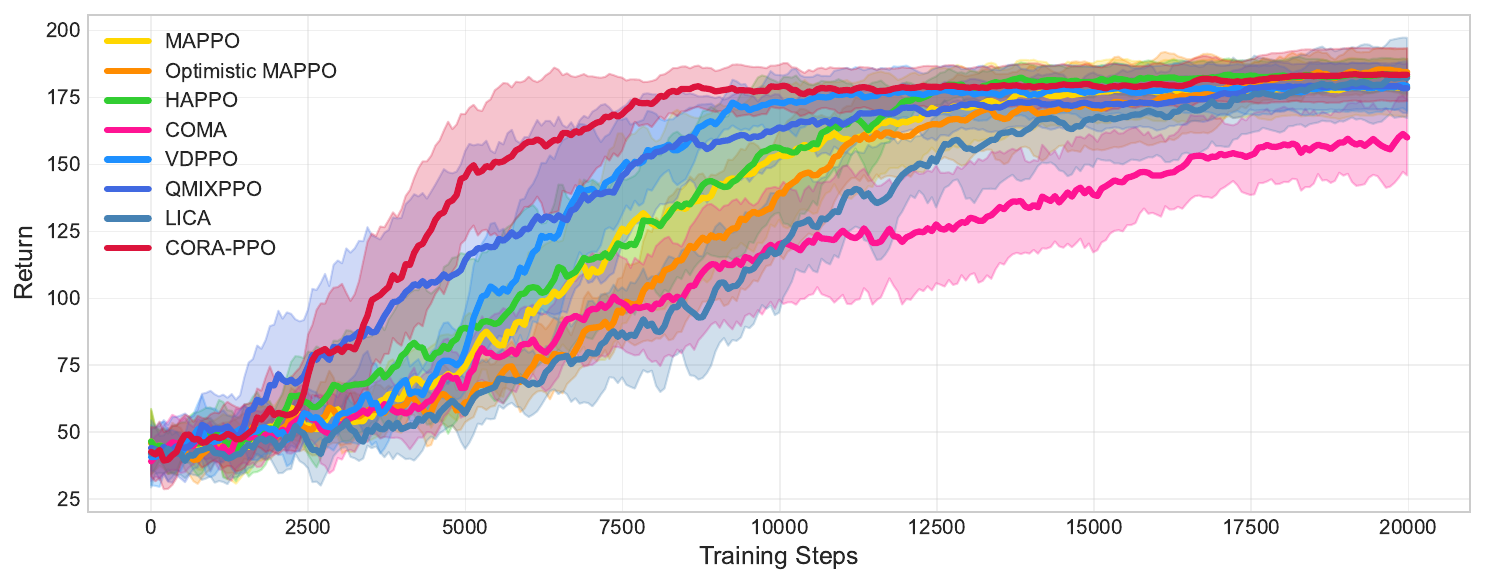}
			\label{fig:matrix_team_game_base}
		}
		\hfill
		\subfloat[5 Peaks (Matrix Team Game)]{%
			\includegraphics[width=0.48\columnwidth]{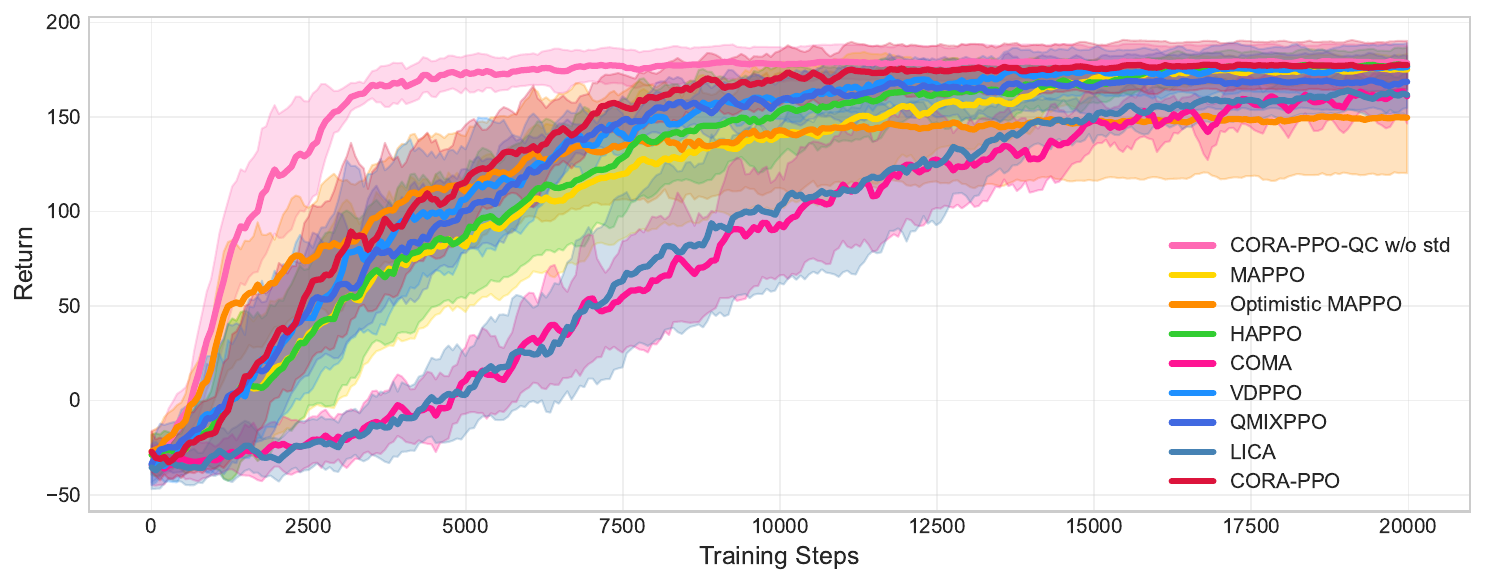}
			\label{fig:matrix_game_peaks_5}
		}
		\par\medskip
		\subfloat[10 Peaks (Matrix Team Game)]{%
			\includegraphics[width=0.48\columnwidth]{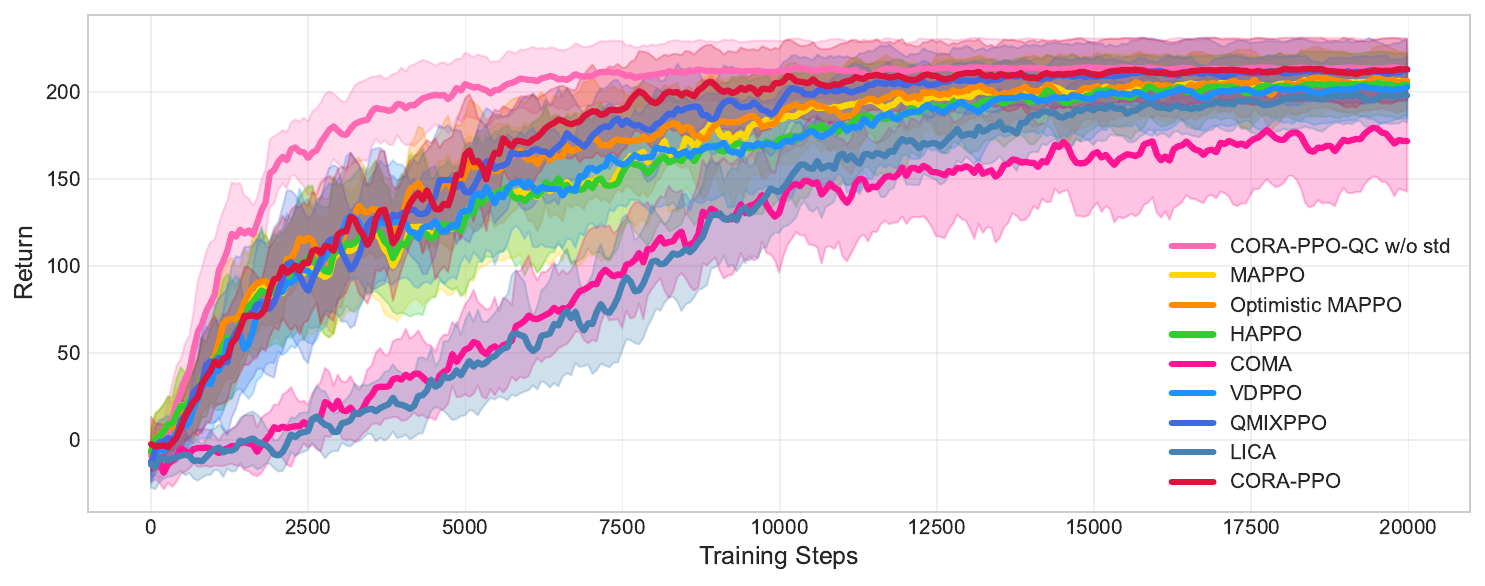}
			\label{fig:matrix_game_peaks_10}
		}
		\hfill
		\subfloat[15 Peaks (Matrix Team Game)]{%
			\includegraphics[width=0.48\columnwidth]{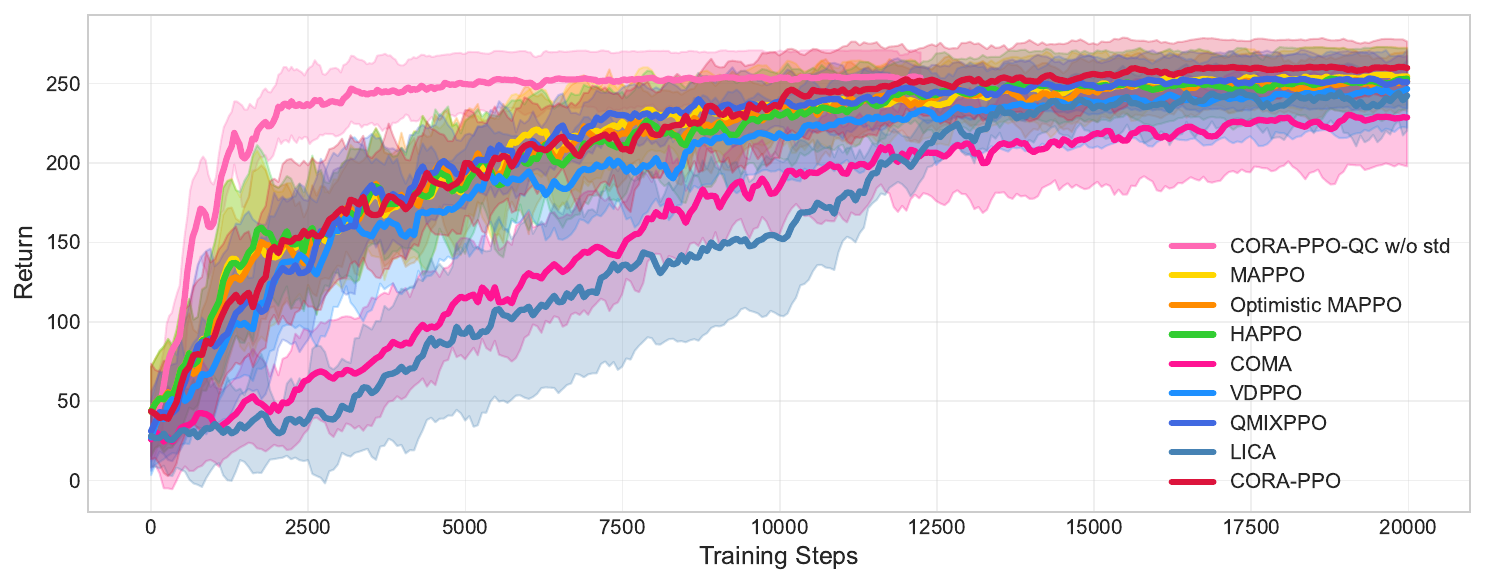}
			\label{fig:matrix_game_peaks_15}
		}
		\caption{
			Training performance on Matrix Team Game and its Multi-Peak variants with 5, 10, and 15 reward peaks.
		}
		\label{fig:matrix_team_game}
	\end{figure}
	
	As shown in Figure~\ref{fig:matrix_team_game}, CORA exhibits faster convergence and higher returns compared to the baseline algorithms, demonstrating superior coordination and learning efficiency in this general environment.
	
	Furthermore, inspired by recent methods in scalable multi-agent value factorization and coordination graphs \cite{christianos2023pareto, bohmer2020deep}, we implemented a quadratic critic network to efficiently approximate the joint action-value function. This architecture parameterizes $Q(s, a)$ as a state-dependent quadratic form:
	\[
		Q(s,a) = b(s) + \sum_{i \in N} \langle u_i(s), a_i \rangle + \sum_{i<j} a_i^\top W_{ij}(s) a_j,
	\]
	where $b(s)$ is a global state-dependent baseline, $u_i(s)$ represents the linear utility vector for agent $i$, and $W_{ij}(s)$ denotes the pairwise interaction matrix. The input $a_i$ corresponds to either a one-hot action vector or a policy probability distribution.

	While previous works primarily utilize this pairwise structure for message-passing and efficient joint-action maximization \cite{bohmer2020deep, christianos2023pareto}, the main advantage of this architecture in our framework is its ability to compute the previously defined coalitional value $Q_C(s,a_C)$ in closed form. Due to the network's linear and quadratic dependency on actions, the expectation over non-coalition agents ($j \notin C$) can be resolved analytically by simply replacing their action inputs $a_j$ with their current policy distributions $\pi_j(\cdot| o_j)$. This exact evaluation bypasses high-variance Monte Carlo sampling. The empirical results, denoted as CORA-PPO-QC, confirm that this precise estimation further highlights the stability advantage of CORA.
	
	\subsection{Differential Games}
	
	To demonstrate the learning process, we designed a 2D differential game environment similar to \cite{Ermo2016Lenient}. Each agent selects an action $x_1, x_2 \in [-5, 5]$ at every step. The reward function $R(x_1, x_2)$ is composed of a sum of several two-dimensional Gaussian potential fields, defined as:
	\begin{equation}
		R(x_1, x_2) = \sum_{i=1}^n h_i \cdot \exp\left( -\frac{(x_1 - c_{x_i})^2 + (x_2 - c_{y_i})^2}{\sigma_i^2} \right)
	\end{equation}
	Here, $n$ is the number of fields, $(c_{x_i}, c_{y_i})$ is the center of the $i$-th potential field, $h_i \in [5, 10]$ indicates the peak height of the potential field, and $\sigma_i \in [1, 2]$ controls its spread. This setup results in an environment with multiple local optima, presenting significant strategy exploration and learning challenges for MARL algorithms. The environment state itself does not evolve and can be regarded as a repeated single-step game. Key parameters like location, height, width of potential fields are set by a random seed.
	
	Figure~\ref{fig:training_curve_dg} compares the performance of MAPPO, HAPPO, CORA-PPO, CORA-PPO without the Std term (CORA-PPO w/o std), and Optimistic MAPPO in this environment. Among them, CORA-PPO achieves the best overall performance. Figure~\ref{fig:trajectories_dg} further illustrates the corresponding learning trajectories by showing how the policy means $\mu_i$ in the Gaussian policies $N(\mu_i, \sigma_i)$ evolve during training. Compared with the baselines, the CORA-PPO variants guide the agents more effectively toward optimal cooperative strategies, with trajectories converging to the peak regions of the 3D reward surface and the brightest regions of the 2D heatmap. In particular, the Std term leads to more stable trajectory convergence, whereas the trajectories of CORA-PPO w/o std are more dispersed.
	
	\begin{figure*}[!t]
		\centering
		\subfloat[MAPPO]{
			\includegraphics[width=0.31\textwidth]{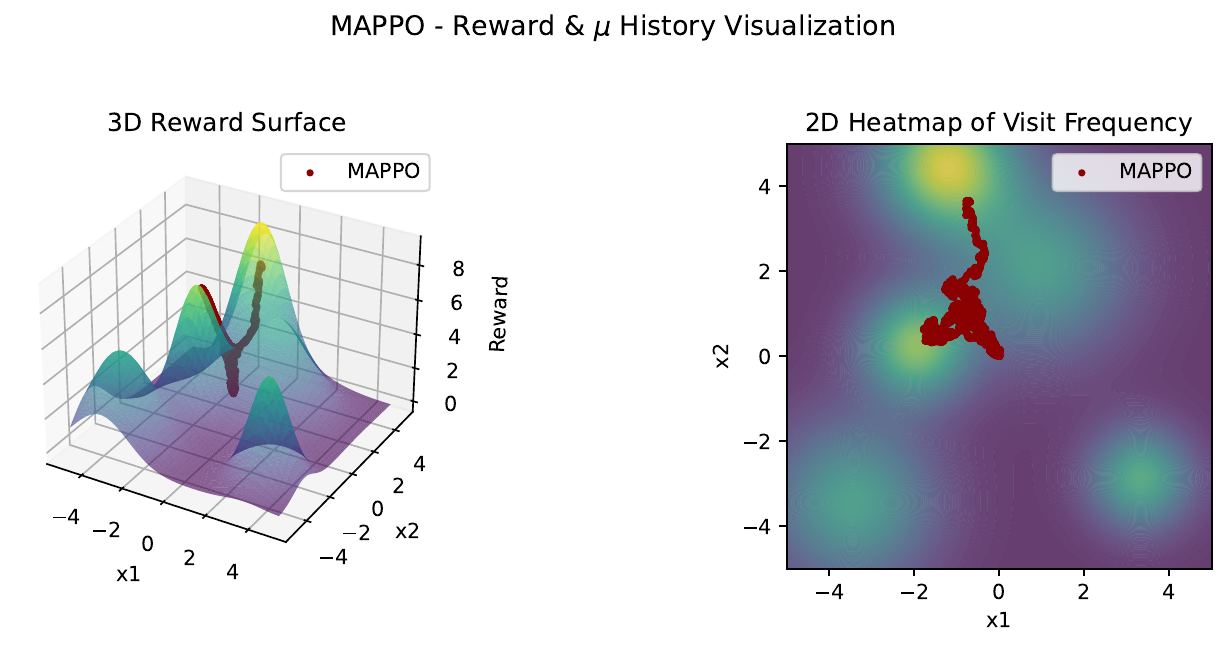}
			\label{fig:mappo_dg}
		}
		\subfloat[Optimistic MAPPO]{
			\includegraphics[width=0.31\textwidth]{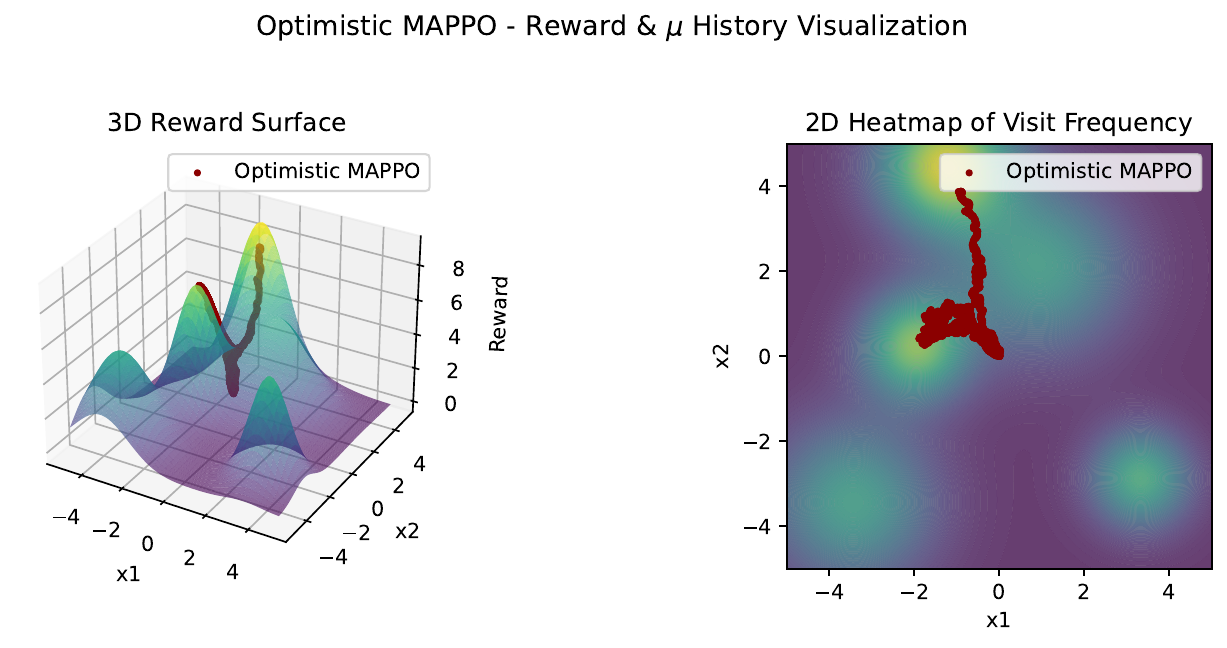}
			\label{fig:optimistic_mappo_dg}
		}
		\subfloat[HAPPO]{
			\includegraphics[width=0.31\textwidth]{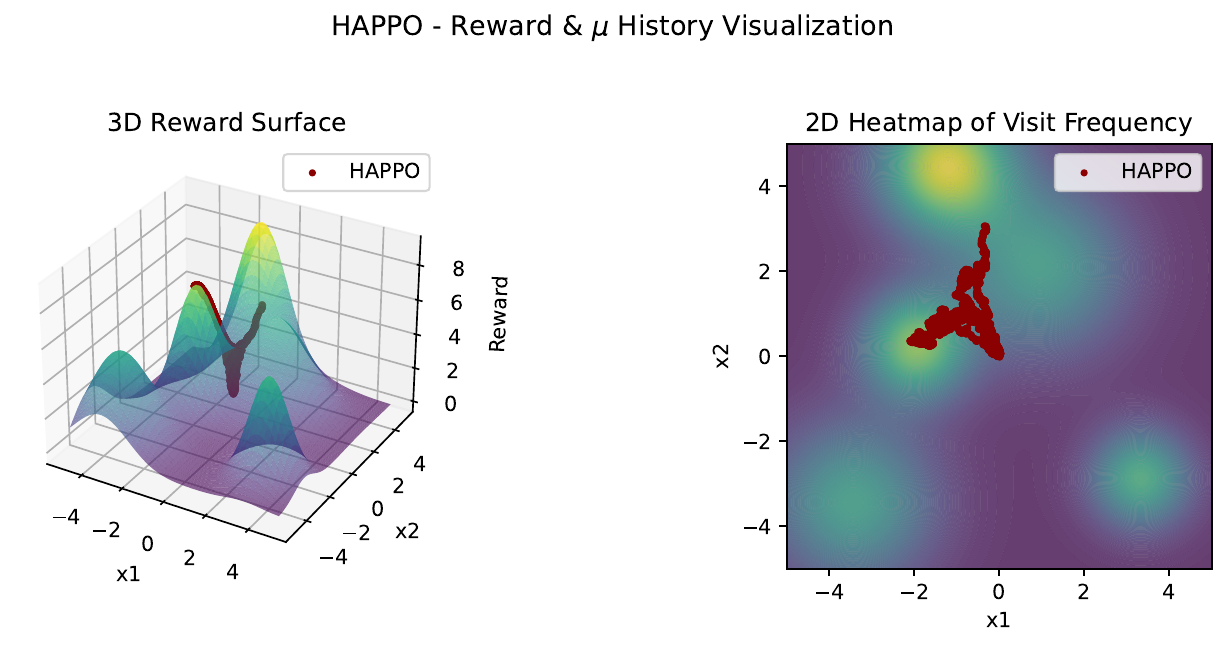}
			\label{fig:happo_dg}
		}
		\par\medskip
		
		\subfloat[CORA-PPO]{
			\includegraphics[width=0.31\textwidth]{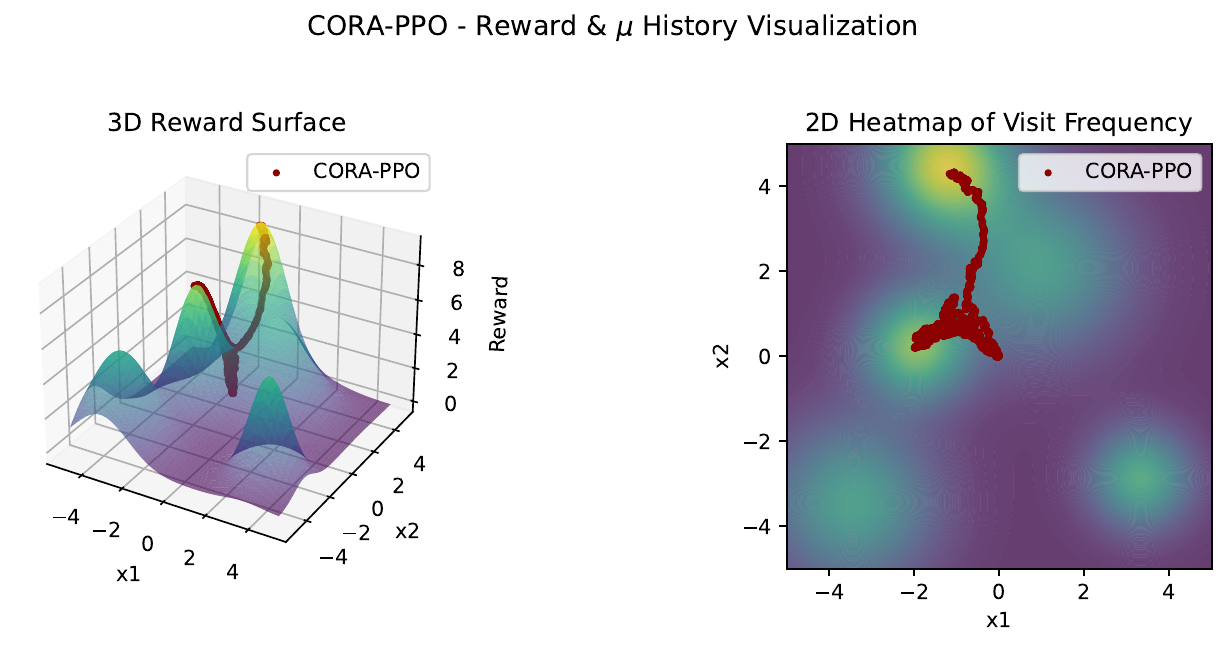}
			\label{fig:coreppo_dg}
		}
		\subfloat[CORA-PPO w/o std]{
			\includegraphics[width=0.31\textwidth]{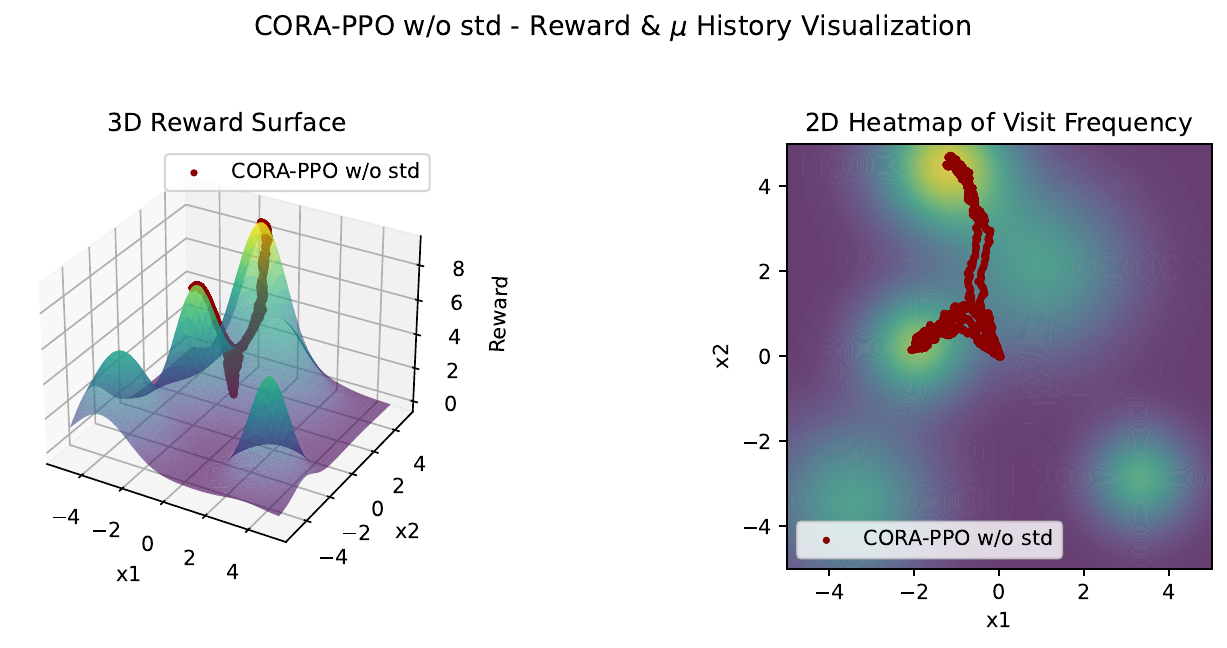}
			\label{fig:coreppo_nostd_dg}
		}
		\subfloat[All Methods]{
			\includegraphics[width=0.31\textwidth]{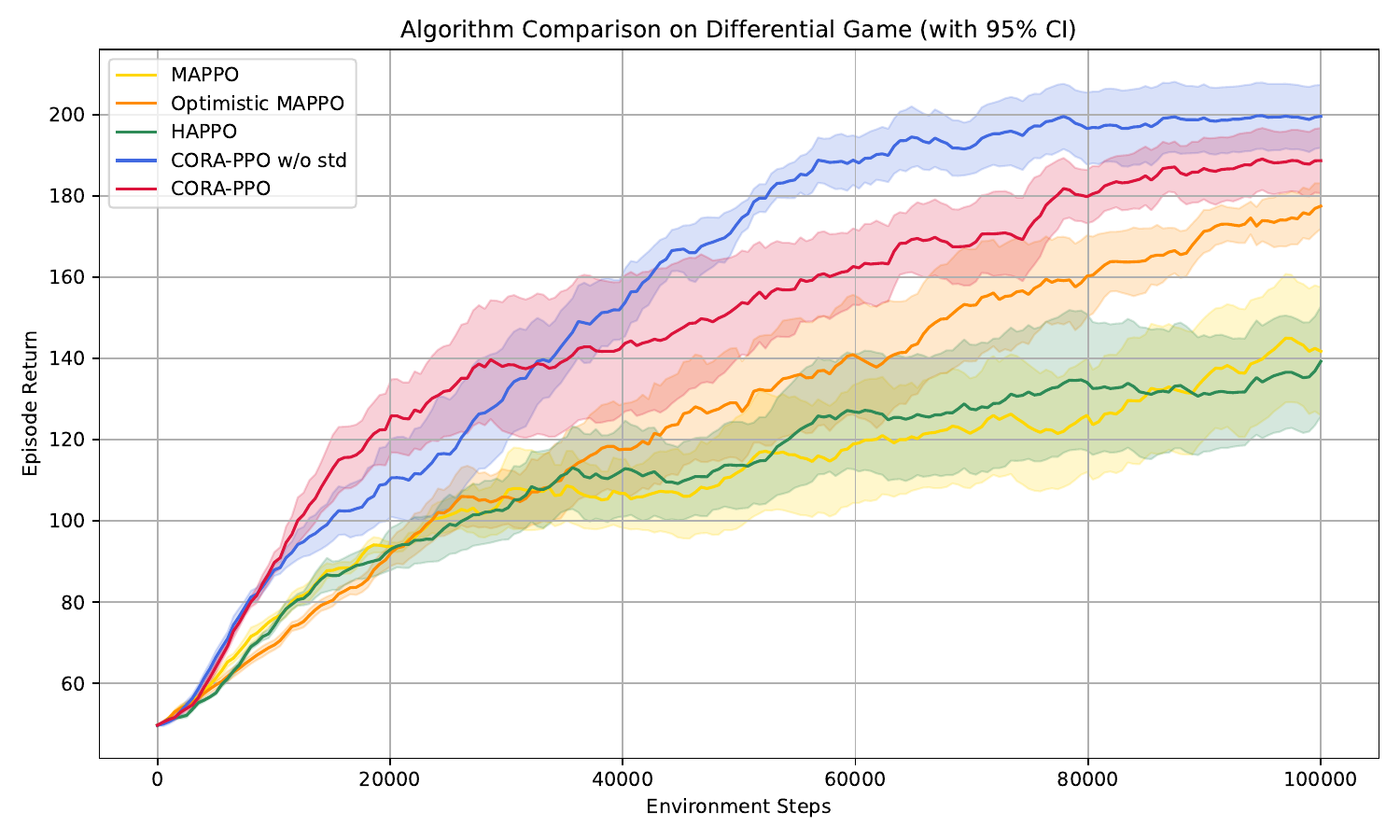}
			\label{fig:training_curve_dg}
		}
		\caption{The reward and learning trajectories of various algorithms in the differential game scenario ($\mu$ in Gaussian strategy).}
		\label{fig:trajectories_dg}
	\end{figure*}
	
	\subsection{VMAS}
	
	VMAS (Vectorized Multi-Agent Simulator) is a PyTorch-based vectorized multi-agent simulator designed for efficient multi-agent reinforcement learning benchmarking \cite{bettini2022vmas, bettini2024controlling, bou2023torchrl}. It provides a range of challenging multi-agent scenarios, and utilizes GPU acceleration, making it suitable for large-scale MARL training. We selected the following scenarios for testing: \textbf{Multi-Give-Way}: Four agents must coordinate to cross a shared corridor by giving way to each other to reach their respective goals. \textbf{Give-Way}: Two agents are placed in a narrow corridor with goals on opposite ends. Success requires one agent to yield and allow the other to pass first, reflecting asymmetric cooperative behavior. \textbf{Navigation}: Agents are randomly initialized and must navigate to their own goals while avoiding collisions.
	
	As shown in Figure \ref{fig:vmas}, CORA achieves higher returns and more stable performance compared to the other algorithms.
	
	\begin{figure*}[!t]
		\centering
		\subfloat[Give-Way]{
			\includegraphics[width=0.31\textwidth]{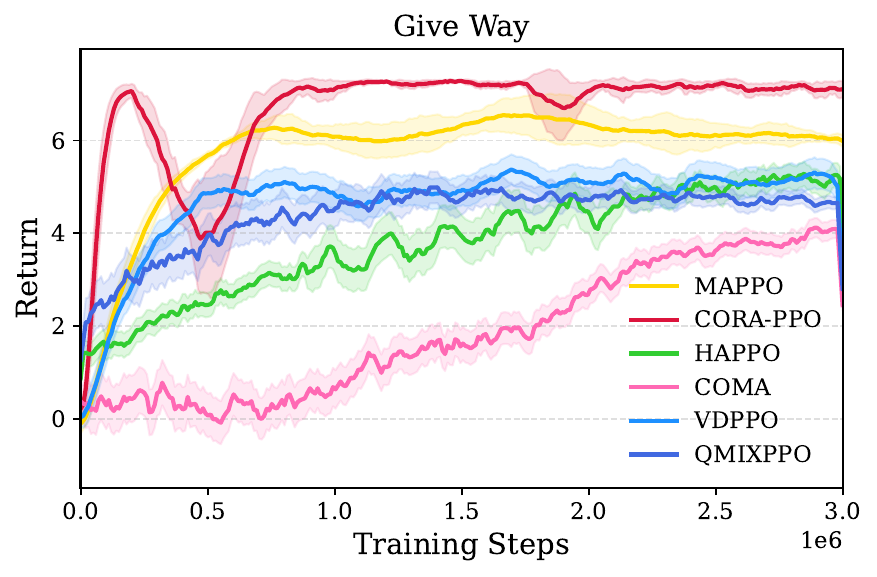}
		}
		\subfloat[Multi-Give-Way]{
			\includegraphics[width=0.31\textwidth]{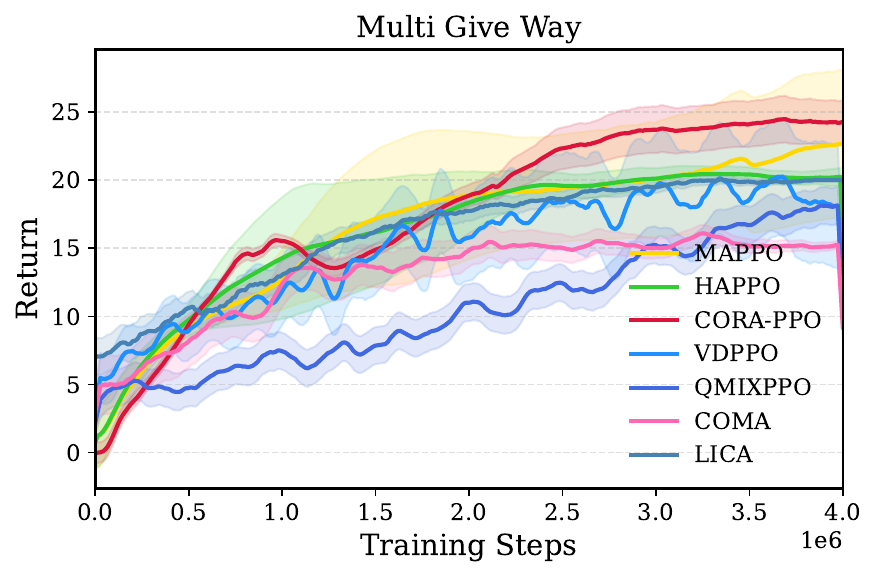}
		}
		\subfloat[Navigation]{
			\includegraphics[width=0.31\textwidth]{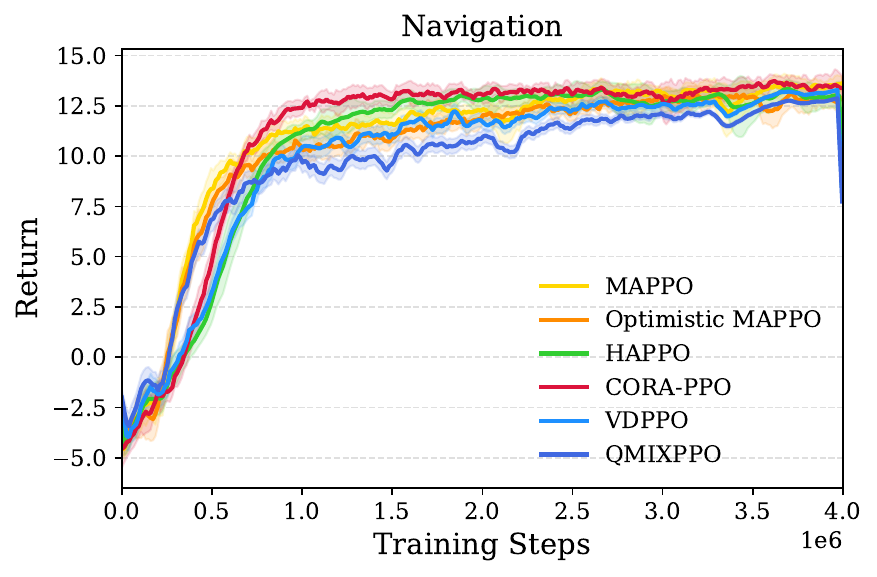}
		}
		\caption{Training performance on the VMAS scenarios.}
		\label{fig:vmas}
	\end{figure*}
	
	\subsection{Multi-Agent Mujoco}
	
	To demonstrate the effectiveness of CORA in continuous control tasks, we conducted experiments on the popular benchmark Multi-Agent MuJoCo (MA-MuJoCo) \cite{kuba2021trust}, using its latest version, MaMuJoCo-v5 \cite{gymnasium_robotics2023github}.
	
	As shown in Figure~\ref{fig:mamujoco}, except for the \textit{HalfCheetah 2x3} task where HAPPO slightly outperforms, CORA-PPO demonstrates superior results in the \textit{Ant 4x2}, \textit{HalfCheetah 6x1}, \textit{Walker2d 2x3}, and \textit{Hopper 3x1} tasks. These results highlight the effectiveness of CORA in handling diverse and challenging multi-agent continuous control environments.
	
	\begin{figure*}[!t]
		\centering
		\subfloat[Ant 2x4]{\includegraphics[width=0.31\textwidth]{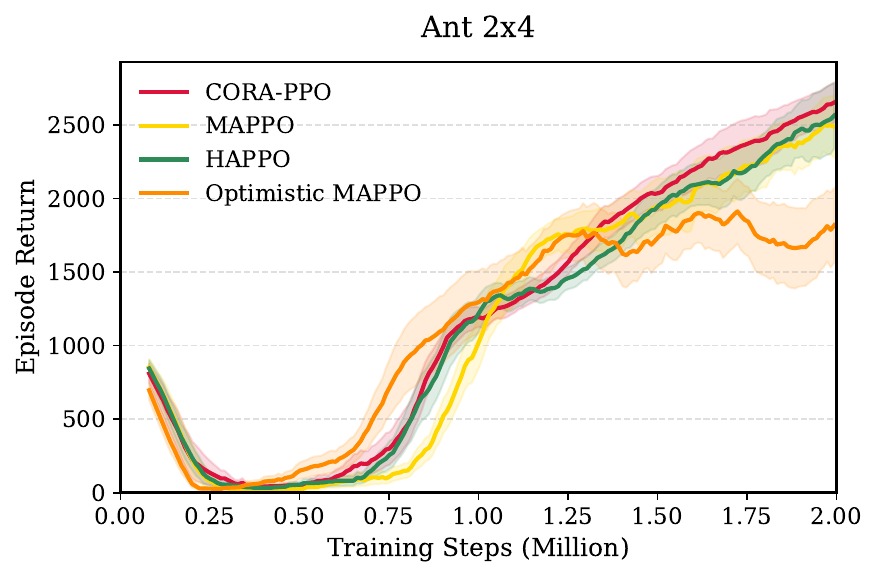}}
		\subfloat[Ant 4x2]{\includegraphics[width=0.31\textwidth]{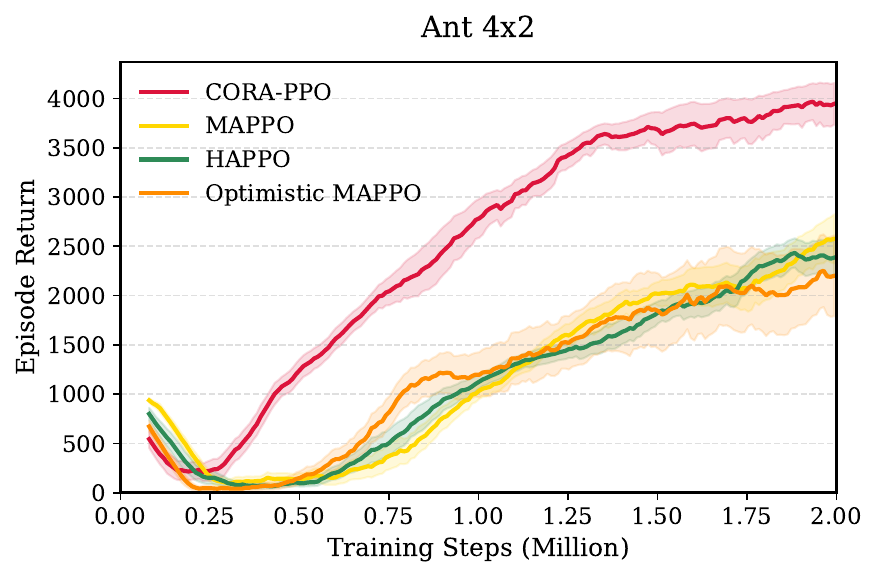}}
		\subfloat[HalfCheetah 2x3]{\includegraphics[width=0.31\textwidth]{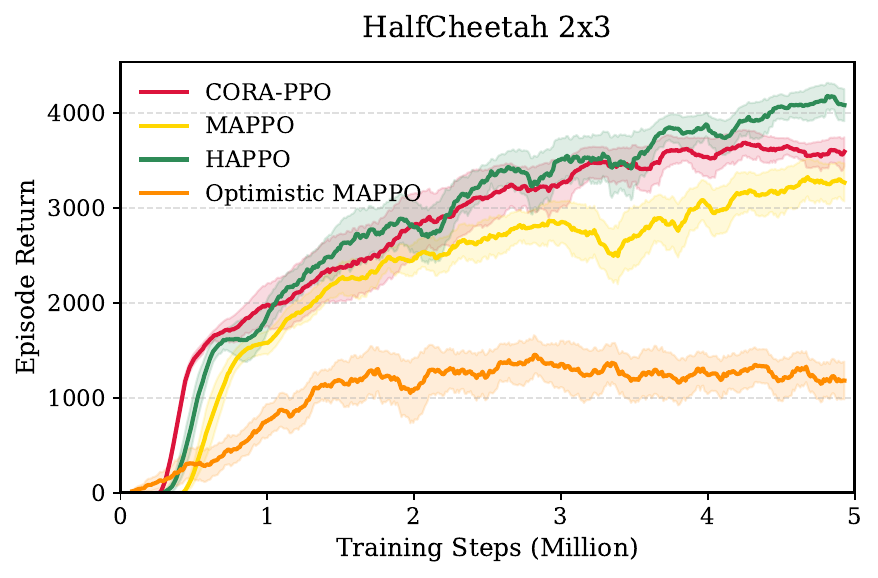}}
		\par\medskip
		
		\subfloat[HalfCheetah 6x1]{\includegraphics[width=0.31\textwidth]{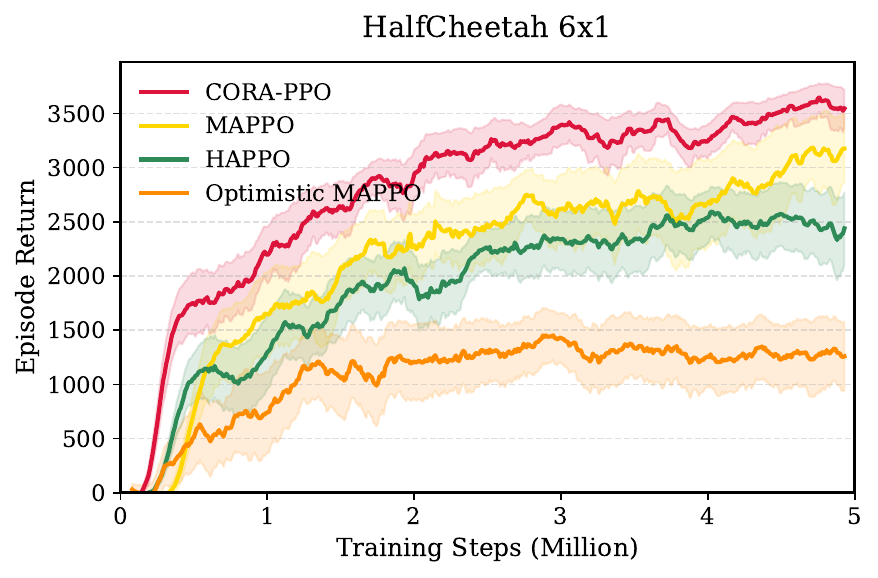}}
		\subfloat[Hopper 3x1]{\includegraphics[width=0.31\textwidth]{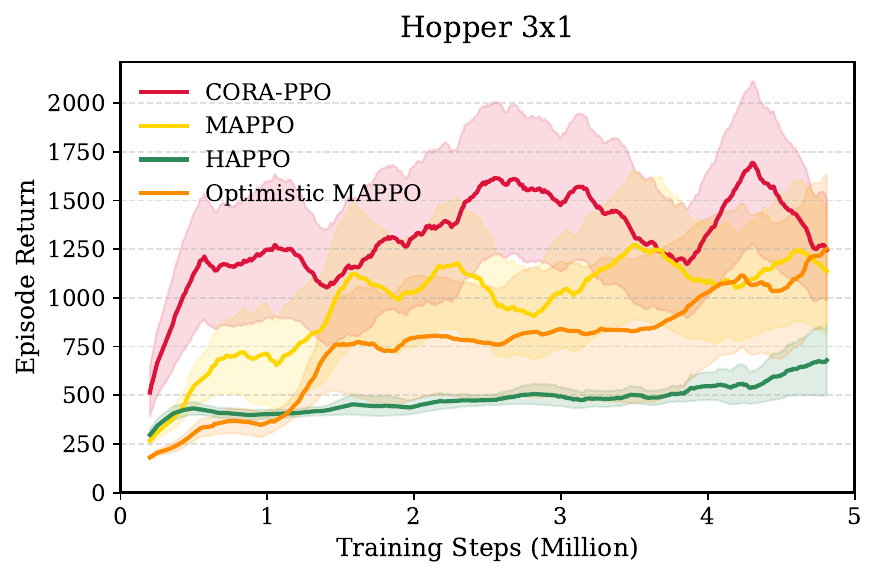}}
		\subfloat[Walker2d 2x3]{\includegraphics[width=0.31\textwidth]{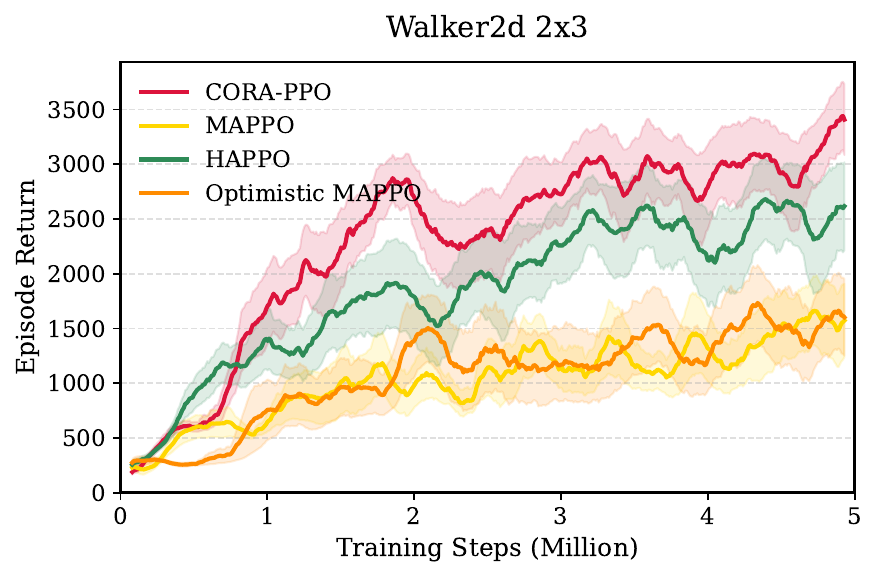}}
		\caption{Performance comparison on Multi-Agent MuJoCo (MaMuJoCo-v5) scenarios.}
		\label{fig:mamujoco}
	\end{figure*}
	
	\subsection{StarCraft Multi-Agent Challenge (SMAC)}
	
	We further evaluate the cooperative performance of CORA-PPO on the StarCraft Multi-Agent Challenge (SMAC). All experiments are conducted on SC2 version 4.10 under decentralized execution with a shared global team reward, following standard protocols. Results are averaged over 8 runs, with 95\% confidence intervals reported.
	
	We select five representative maps with different levels of cooperative difficulty: 3s\_vs\_5z, 8m, 2s\_vs\_1sc, 2s3z, and 3m. These scenarios vary in team composition, spatial complexity, and micro-management requirements, posing substantial challenges for credit assignment and coordinated tactical control.
	
	Figures~\ref{fig:smac_3s5z}--\ref{fig:smac_3m} show that CORA-PPO consistently attains higher win rates and faster convergence than MAPPO and HAPPO across all selected maps. On the more challenging scenarios, particularly \textit{3s\_vs\_5z} and \textit{2s\_vs\_1sc}, CORA-PPO exhibits stronger cooperative control and better asymptotic performance, highlighting its effectiveness under partial observability and intensive agent interactions.

	\begin{figure*}[!t]
		\centering
		\subfloat[3s\_vs\_5z]{\includegraphics[width=0.31\textwidth]{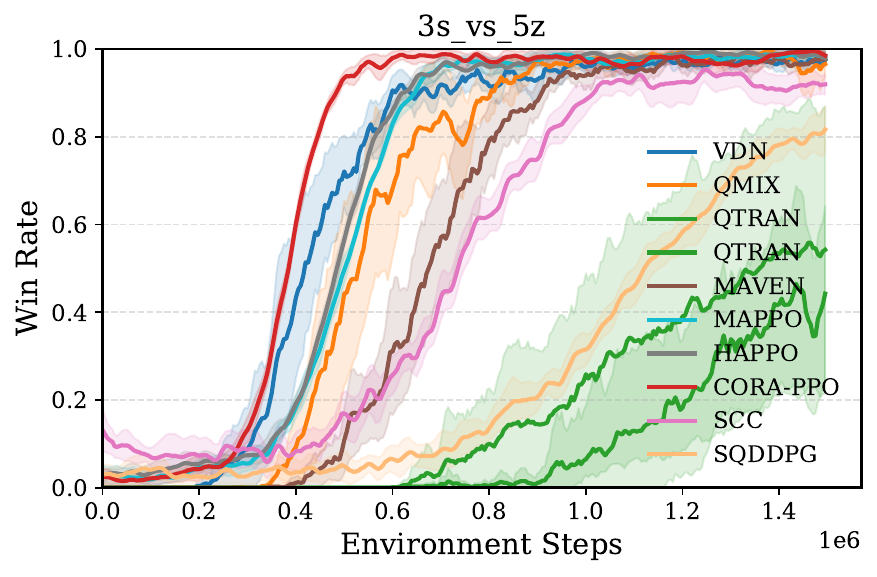}\label{fig:smac_3s5z}}
		\subfloat[8m]{\includegraphics[width=0.31\textwidth]{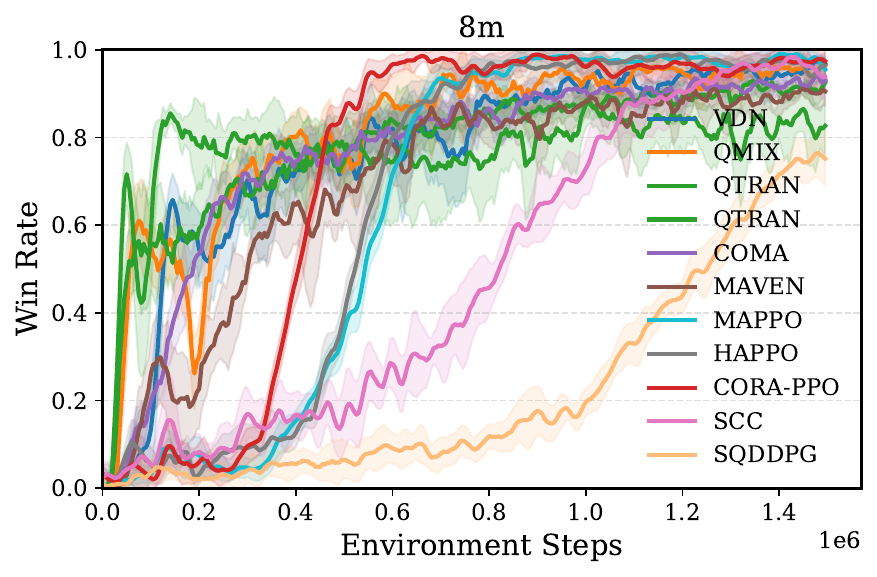}\label{fig:smac_8m}}
		\subfloat[2s\_vs\_1sc]{\includegraphics[width=0.31\textwidth]{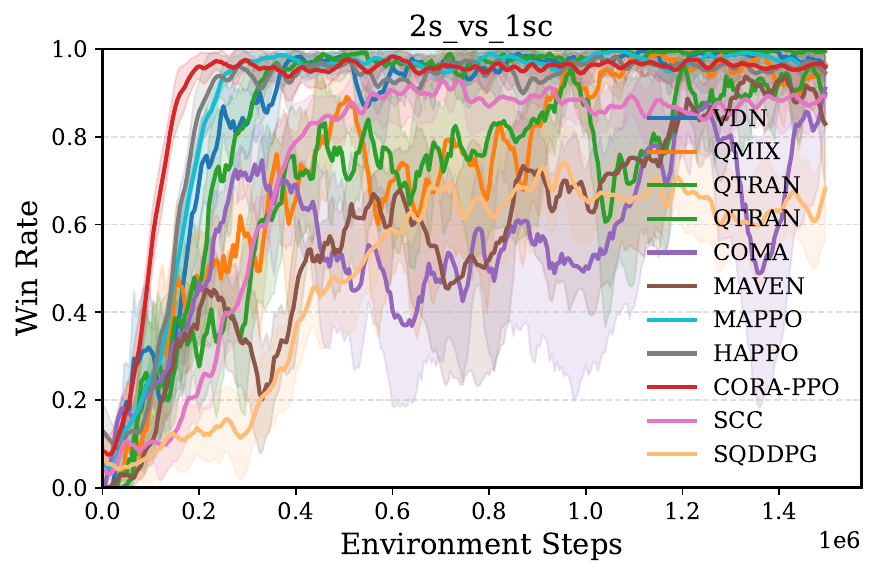}\label{fig:smac_2s1sc}}
		\par\medskip
		
		\subfloat[2s3z]{\includegraphics[width=0.31\textwidth]{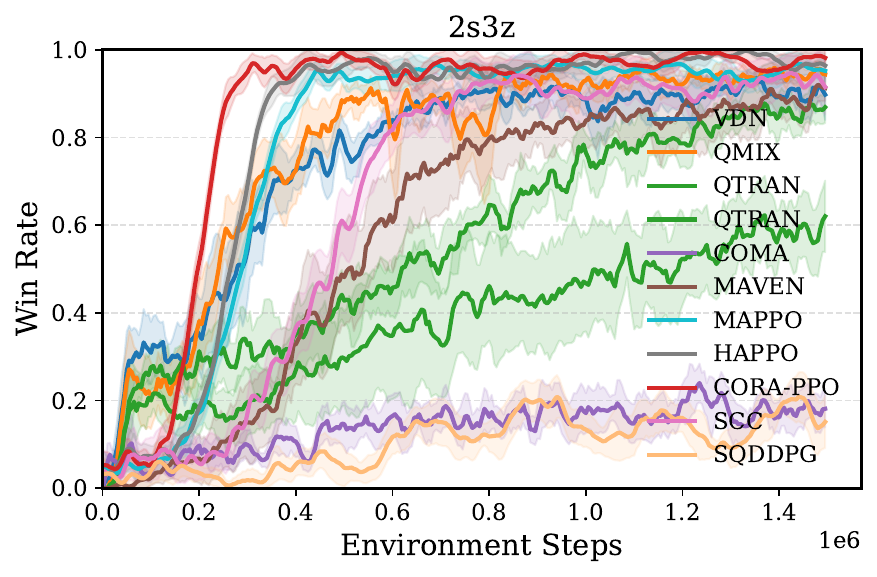}\label{fig:smac_2s3z}}
		\subfloat[3m]{\includegraphics[width=0.31\textwidth]{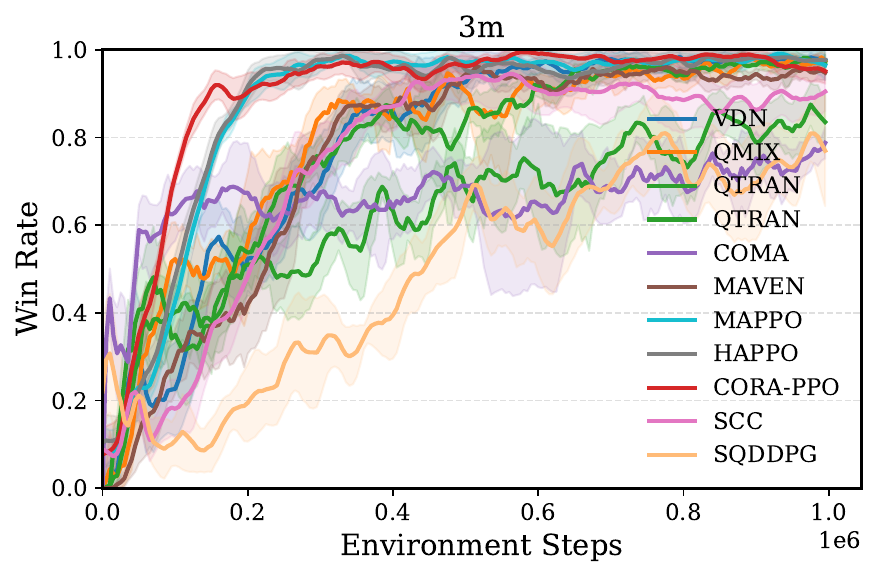}\label{fig:smac_3m}}
		\caption{Performance comparison on SMAC scenarios.}
		\label{fig:smac_results}
	\end{figure*}
	
	\subsection{Google Research Football}
	
	We further evaluate CORA-PPO on the Google Research Football (GRF) benchmark. Results are averaged over 8 random seeds, with 95\% confidence intervals reported. We consider three representative cooperative tasks: \textit{3 vs 1 with keeper} (3 agents), \textit{counterattack\_easy}, and \textit{counterattack\_hard}. These scenarios differ in offensive coordination requirements and planning difficulty, providing a meaningful test bed for cooperative decision-making under sparse and delayed rewards.

	Figures~\ref{fig:grf_3v1}--\ref{fig:grf_counterattack_hard} show that CORA-PPO consistently achieves higher returns and more stable training dynamics than the baselines.
	
	\begin{figure*}[!t]
		\centering
		\subfloat[3 vs 1 with keeper]{\includegraphics[width=0.31\textwidth]{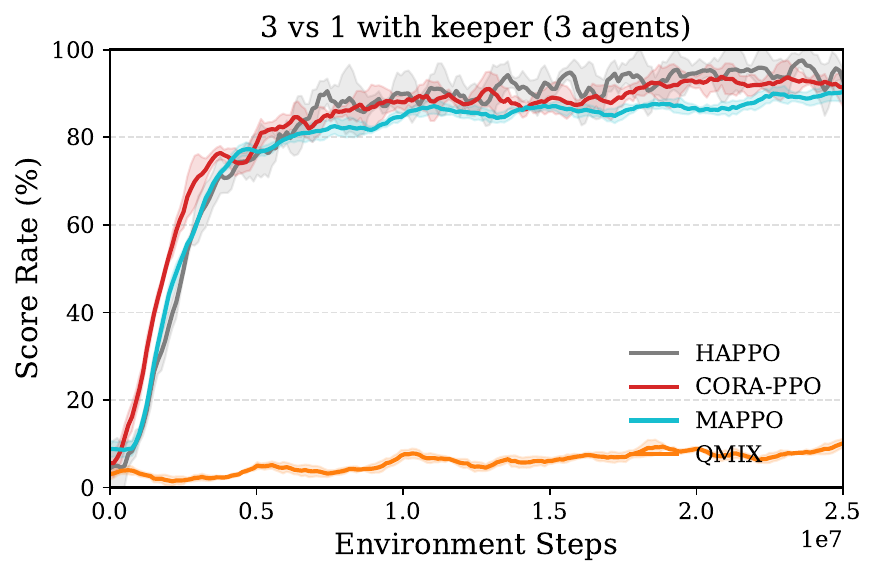}\label{fig:grf_3v1}}
		\subfloat[counterattack\_easy]{\includegraphics[width=0.31\textwidth]{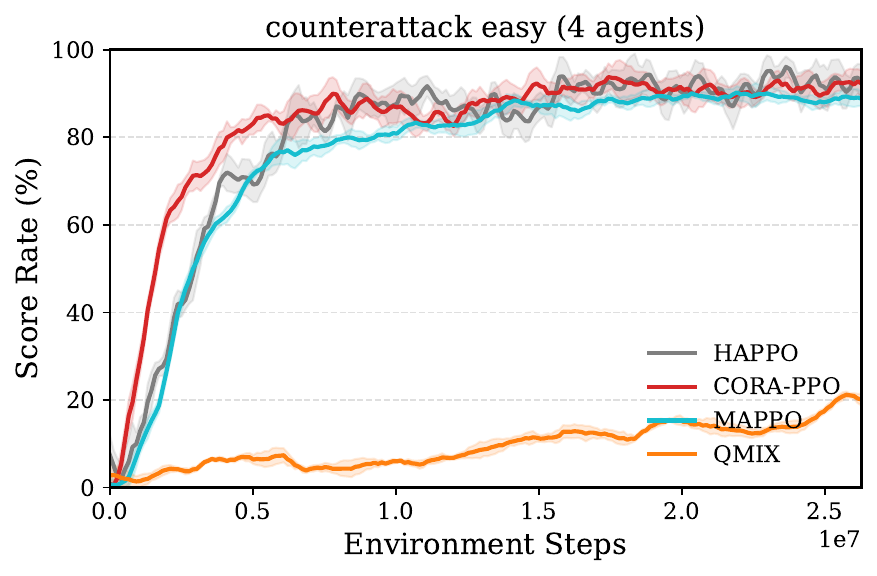}\label{fig:grf_counterattack_easy}}
		\subfloat[counterattack\_hard]{\includegraphics[width=0.31\textwidth]{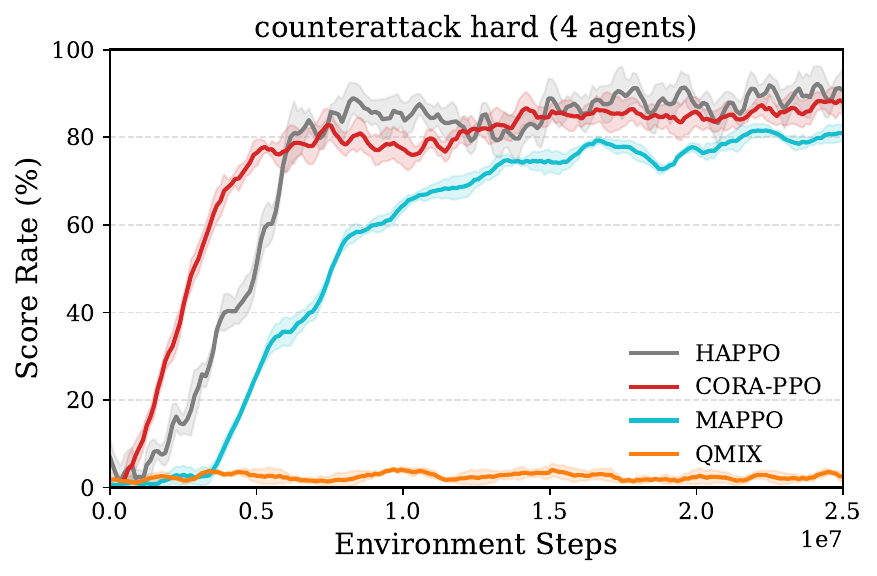}\label{fig:grf_counterattack_hard}}
		\caption{Performance comparison on GRF scenarios.}
		\label{fig:grf_results}
	\end{figure*}

	\subsection{Ablation Study of Coalition Sample Size and the Std Term}

	To investigate the effect of coalition sample size on performance, we conduct an ablation study in a differential game with 5 agents, where each agent has a one-dimensional action and the resulting joint action space is 5-dimensional. Considering the computational cost of larger-scale multi-agent settings, we focus on this setting because it provides a reasonable balance between task complexity and computational tractability.

	\begin{figure}[htbp]
		\centering
		\includegraphics[width=0.48\columnwidth]{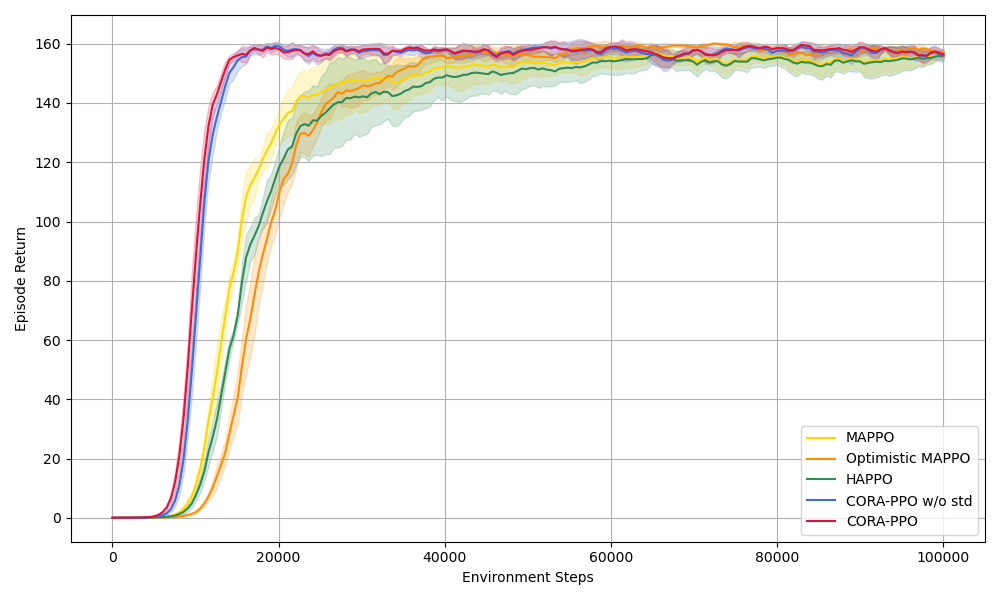}
		\hfill
		\includegraphics[width=0.48\columnwidth]{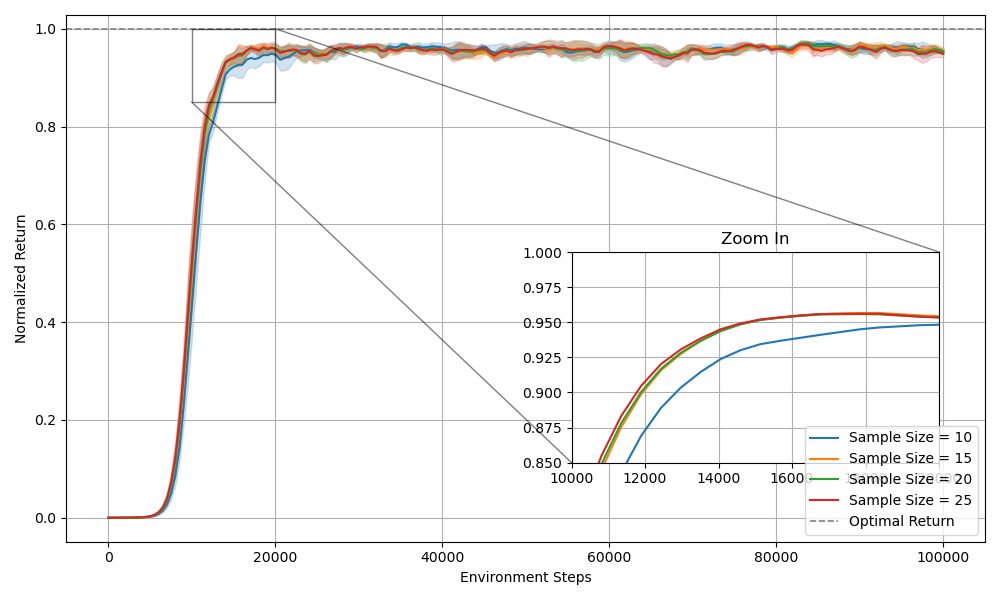}
		\caption{Training performance in 5-agent differential game. \textbf{Left}: Comparison among baseline methods. \textbf{Right}: Effect of coalition sampling size (sample sizes = 10, 15, 20, 25; full coalition size is $2^n - 2 = 30$). All algorithms are repeated 5 times to obtain a 95\% confidence interval. Key hyperparameters: Actor learning rate $5 \times 10^{-5}$, Critic learning rate $5 \times 10^{-4}$, $\gamma=0.99$, GAE $\lambda=0.95$, 10 epochs per update, clip $\epsilon_{\text{clip}}=0.2$, and 4 parallel environments.}
		\label{fig:sample_size_ablation}
	\end{figure}

	As shown in Figure~\ref{fig:sample_size_ablation}, increasing the coalition sample size generally improves performance, particularly in the early stages of training, as highlighted in the zoomed-in window. However, even with smaller sampling sizes (e.g., 10 or 15), the CORA still achieves competitive results. This indicates that CORA is robust to sample efficiency and remains effective under reduced computation, making it applicable to environments with a moderate number of agents. In addition, CORA with the Std term often improves performance.

	\begin{figure}[htbp]
		\centering
		\includegraphics[width=\columnwidth]{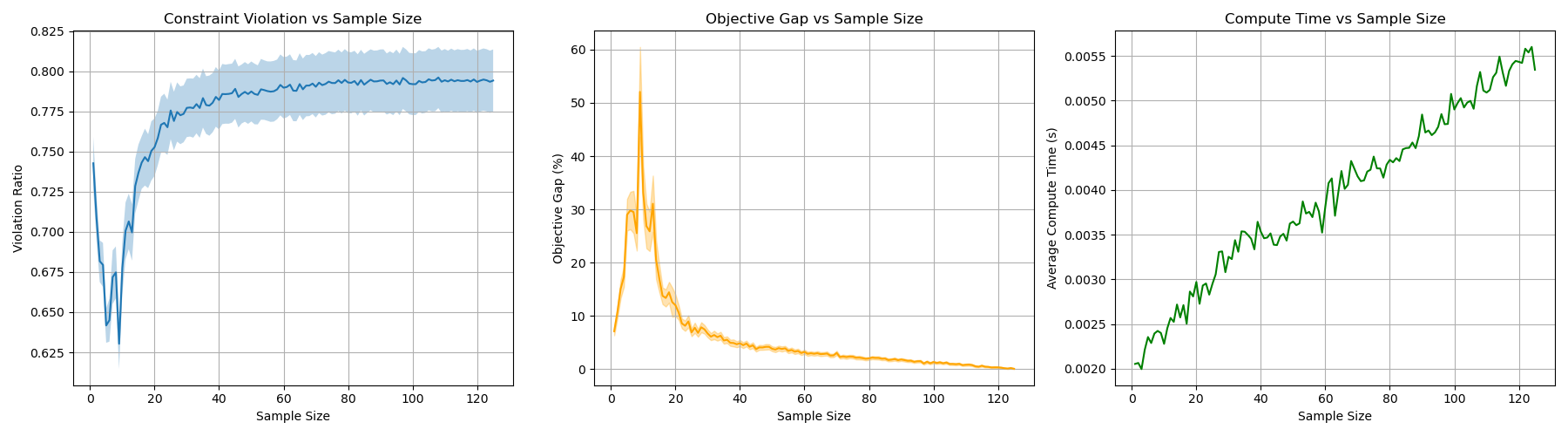}
		\caption{Approximation error and computational cost of sampled credit assignment. Violation Ratio denotes the proportion of violated coalition-rationality constraints; Objective Gap denotes the relative difference in objective value compared with the full solution; Compute Time denotes the average runtime across trials. Results are reported for 7 agents, with randomly generated advantage functions over 20 trials.}
		\label{fig:approx_credit}
	\end{figure}

	Figure~\ref{fig:approx_credit} shows that a relatively small number of sampled coalitions already provides an accurate and computationally efficient approximation. Although the violation ratio increases slightly when fewer coalitions are sampled, the objective gap remains small while computation time is substantially reduced. These results support sampled credit assignment as a practical alternative in larger-scale settings, where enumerating all coalition constraints becomes computationally prohibitive.

	\section{Conclusion}
	\label{sec:conclusion}

	This paper studies credit assignment in cooperative multi-agent policy gradient methods from a coalitional perspective. We proposed CORA, a core-based advantage allocation framework that evaluates coalition-wise advantages and computes per-agent credits through a regularized least $\epsilon$-core formulation. In this way, CORA preserves global consistency while ensuring that strategically valuable coalitions receive sufficient incentives during policy updates. We further established coalition-level policy-improvement bounds and a sampled-coalition approximation guarantee, which together clarify why CORA protects beneficial coalition behaviors and remains tractable in practice. Experiments on matrix games, differential games, VMAS, Multi-Agent MuJoCo, SMAC, and Google Research Football showed consistent improvements in learning efficiency, stability, and final performance over strong baselines. 
	
	Future work will consider more scalable coalition evaluation, other cooperative-game solution concepts for credit assignment, such as the CIS value, the Banzhaf value, and the solidarity value, and broader extensions to large-scale, partially observable, and heterogeneous multi-agent systems.
	
	\bibliography{reference}
	\bibliographystyle{IEEEtran}

	\begin{IEEEbiography}[{\includegraphics[width=1in,height=1.25in,clip,keepaspectratio]{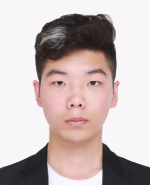}}]{Mengda Ji}
		is currently a Ph.D. student in the Unmanned System Research Institute, and the Key Laboratory of Intelligence, Games and Information Processing of Shaanxi Province, Northwestern Polytechnical University, China. He received his B.S. degree from Northwestern Polytechnical University, China, in 2021. His research interests include game theory and multi-agent reinforcement learning.
	\end{IEEEbiography}
	
	\begin{IEEEbiography}[{\includegraphics[width=1in,height=1.25in,clip,keepaspectratio]{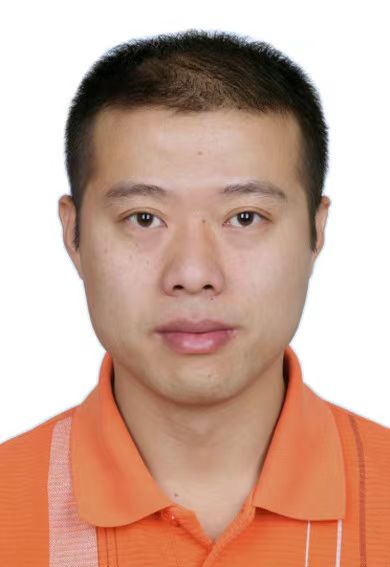}}]{Genjiu Xu}
		received the Ph.D. degree from University of Twente, Enschede, The Netherlands, in 2008. He is a Full Professor and Associate Dean with the School of mathematics and statistics, and the Leader of Key Laboratory of Intelligence, Games and Information Processing of Shaanxi Province, Northwestern Polytechnical University, China. He is currently the Vice Chairman of Game Theory Committee of China Operations Research Society. His research interests include game theory and operations research.
	\end{IEEEbiography}
	
	\begin{IEEEbiography}[{\includegraphics[width=1in,height=1.25in,clip,keepaspectratio]{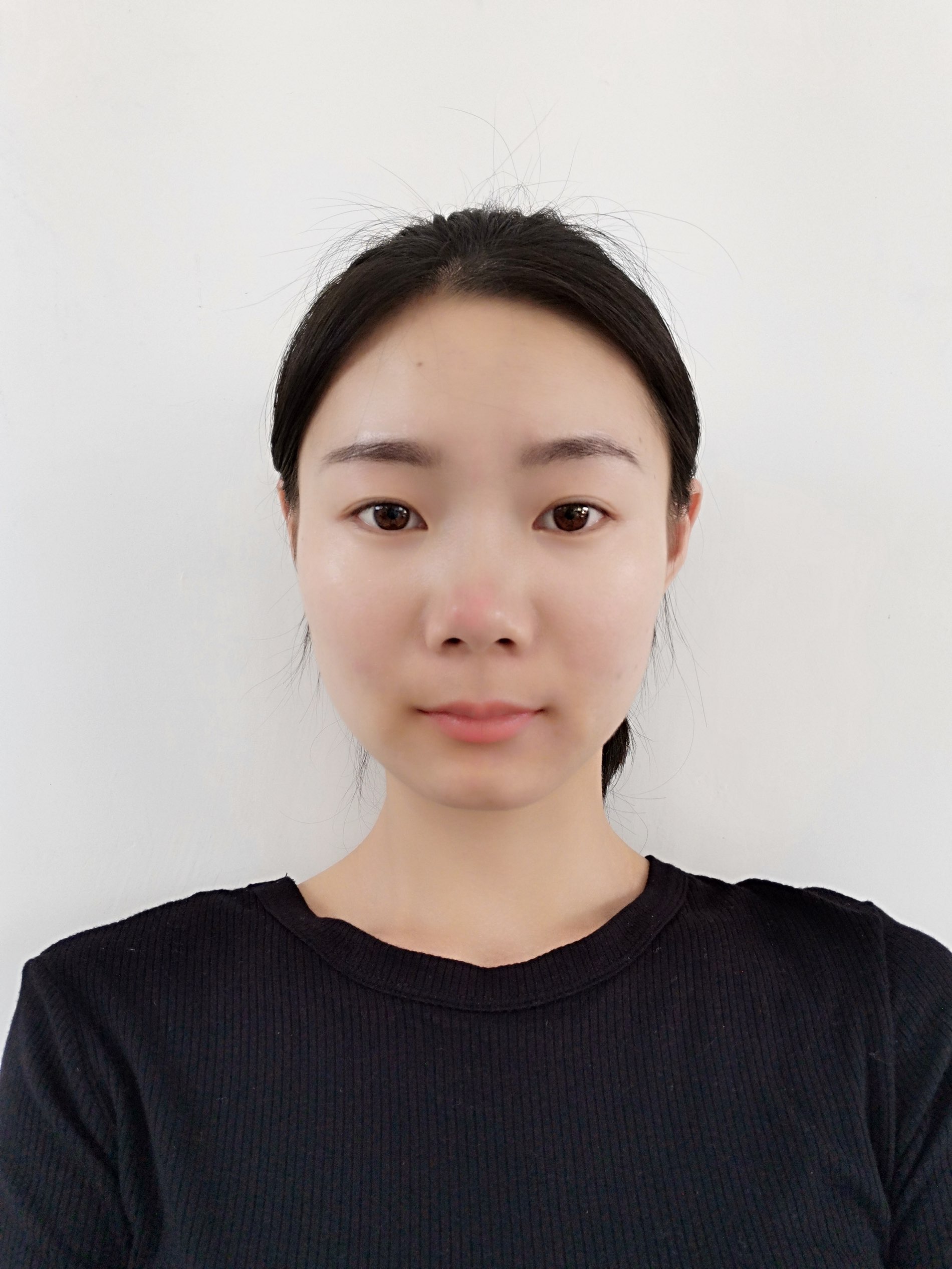}}]{Keke Jia}
		is currently a Ph.D. student in the School of mathematics and statistics, and the Key Laboratory of Intelligence, Games and Information Processing of Shaanxi Province, Northwestern Polytechnical University, China. Her research interests include game theory and collaboration of multi-agent system.
	\end{IEEEbiography}

	\begin{IEEEbiography}[{\includegraphics[width=1in,height=1.25in,clip,keepaspectratio]{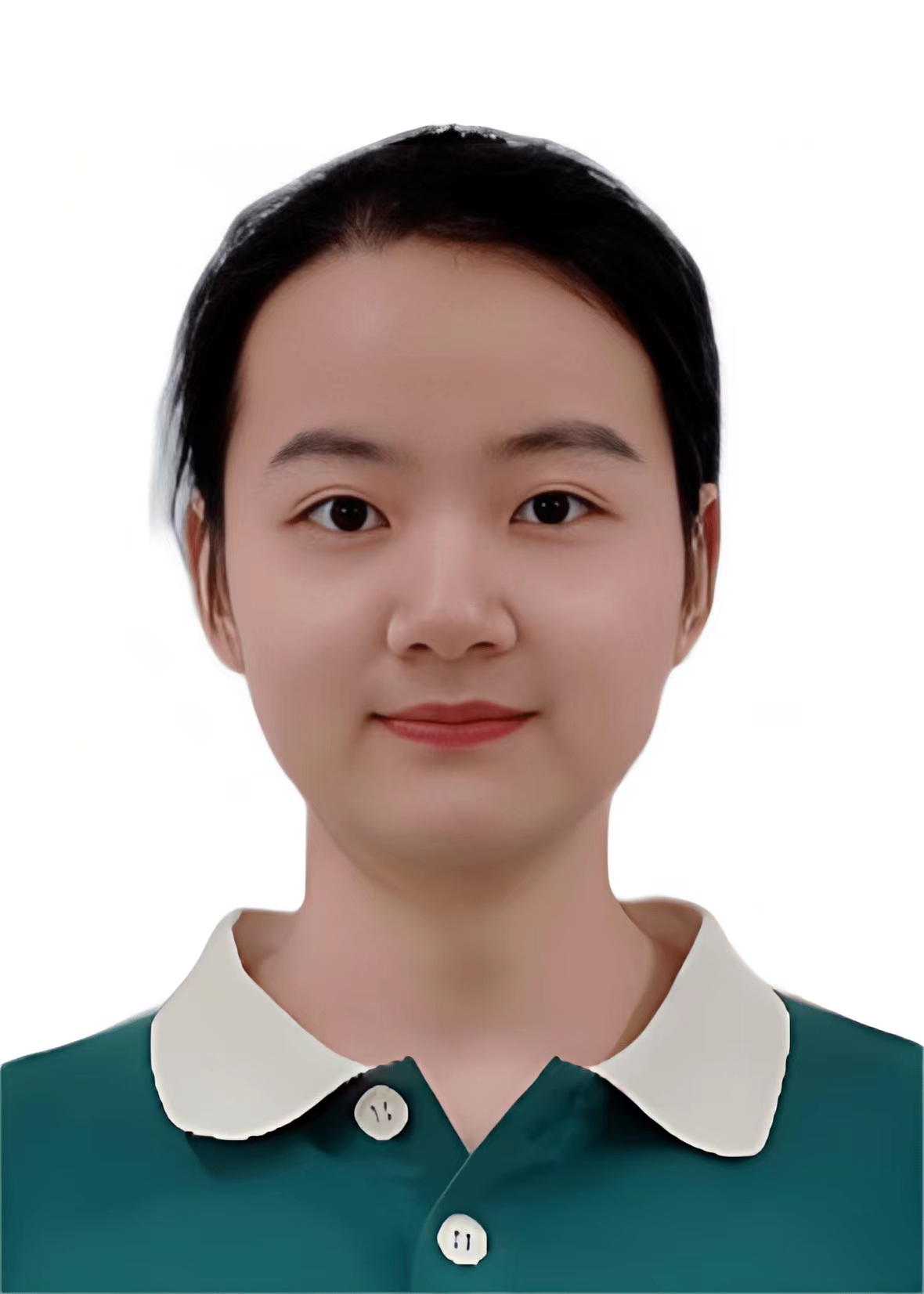}}]{Zekun Duan}
		received the B.S. degree in applied statistics from North China University of Water Resources and Electric Power, Zhengzhou, China, in 2019. She is currently pursuing the Ph.D. degree with the School of Mathematics and Statistics, Northwestern Polytechnical University, Xi’an, China. Her research interests include noncooperative game theory, multi-agent systems, and equilibrium computation.
	\end{IEEEbiography}
	
	\begin{IEEEbiography}[{\includegraphics[width=1in,height=1.25in,clip,keepaspectratio]{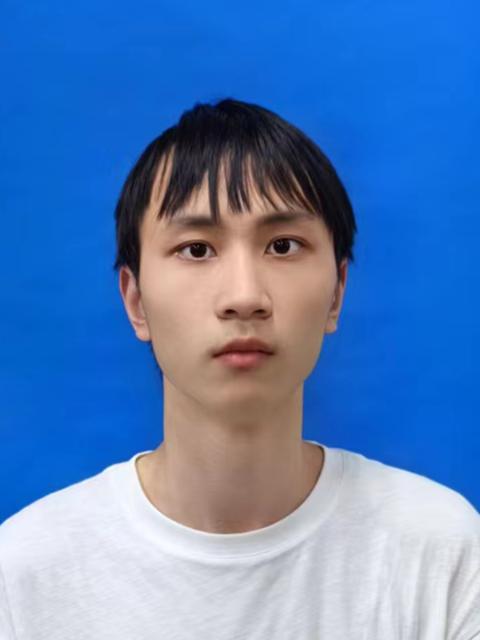}}]{Yong Qiu} is currently a postgraduate student at the School of Mathematics and Statistics, Northwestern Polytechnical University, Xi’an, China. He received his B.S. degree from North China Electric Power University, China, in 2025. His research interests include game theory, multi-agent reinforcement learning, and operations optimization.
	\end{IEEEbiography}
	
	\begin{IEEEbiography}[{\includegraphics[width=1in,height=1.25in,clip,keepaspectratio]{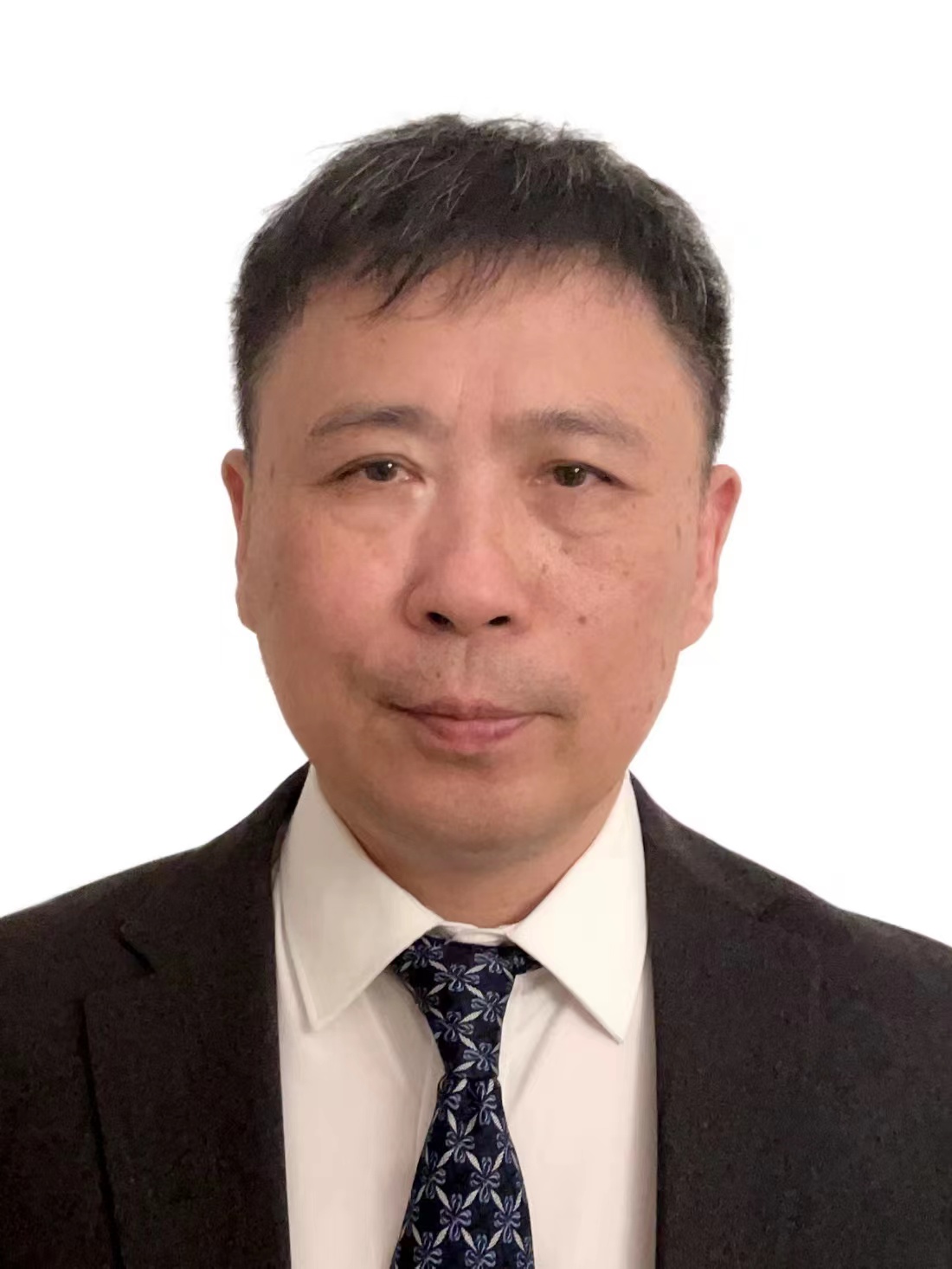}}]{Jianjun Ge}
		received the Ph.D. degree from Beijing Institute of Technology, China, in 2016. He is a Chief Scientist and Director with the Information Science Academy, China Electronics Technology Group Corporation. His research interests include information theory, radar systems and artificial intelligence.
	\end{IEEEbiography}
	
	\begin{IEEEbiography}[{\includegraphics[width=1in,height=1.25in,clip,keepaspectratio]{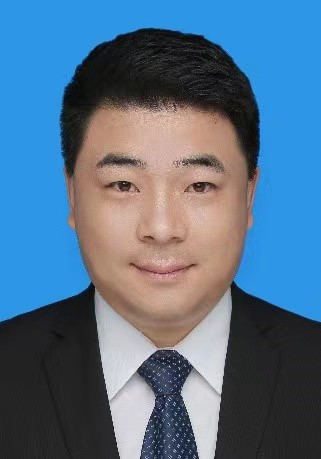}}]{Mingqiang Li}
		received the Ph.D. degree from the University of Chinese Academy of Sciences in 2017. He is a Senior Researcher with the Information Science Academy, China Electronics Technology Group Corporation. His main research interests include representation learning, optimization theory and intelligent decision-making.
	\end{IEEEbiography}
	\vfill
	
\end{document}